\newtheorem{theorem}{Theorem}
\newtheorem{definition}[theorem]{Definition}
\newtheorem{lemma}[theorem]{Lemma}
\newtheorem{proposition}[theorem]{Proposition}
\newenvironment{proof}[1][Proof]{\noindent\textbf{#1.} }{\ \rule{0.5em}{0.5em}}
\begin{document}
\title{Finsler gravity action from variational completion}

\author{Manuel Hohmann}
\email{manuel.hohmann@ut.ee}
\affiliation{Laboratory of Theoretical Physics, Institute of Physics, University of Tartu, W. Ostwaldi 1, 50411 Tartu, Estonia}

\author{Christian Pfeifer}
\email{christian.pfeifer@ut.ee}
\affiliation{Laboratory of Theoretical Physics, Institute of Physics, University of Tartu, W. Ostwaldi 1, 50411 Tartu, Estonia}

\author{Nicoleta Voicu}
\email{nico.voicu@unitbv.ro}
\affiliation{Faculty of Mathematics and Computer Science, Transilvania University, Iuliu Maniu Str. 50, 500091 Brasov, Romania}

\begin{abstract}
In the attempts to apply Finsler geometry to construct an extension of general relativity, the question about a suitable generalization of the Einstein equations is still under debate. Since Finsler geometry is based on a scalar function on the tangent bundle, the field equation which determines this function should also be a scalar equation. In the literature two such equations have been suggested: the one by Rutz and the one by one of the authors. Here we employ the method of canonical variational completion to show that Rutz equation can not be obtained from a variation of an action and that its variational completion yields the latter field equations. Moreover, to improve the mathematical rigor in the derivation of the Finsler gravity field equation, we formulate the Finsler gravity action on the positive projective tangent bundle. This has the advantage of allowing us to apply the classical variational principle, by choosing the domains of integration to be compact and independent of the dynamical variable. In particular in the pseudo-Riemannian case, the vacuum field equation becomes equivalent to the vanishing of the Ricci tensor.
\end{abstract}

\maketitle

\section{Introduction}\label{sec:intro}
A large source of information about the physical properties of spacetime is obtained by observing the motion of point particles. The observed trajectories are identified with the geodesics of the geometry of spacetime and thus, by matching the observed curves with the predicted geodesics, the viability of a certain geometry can be tested. Conversely, when a spacetime geometry is determined by dynamical physical field equations, its geodesics can be derived and observable effects can be predicted.

General relativity is based on pseudo-Riemannian geometry, i.e.\ a spacetime manifold equipped with a metric tensor of Lorentzian signature. The metric is determined by the Einstein equations, its geodesics predict the motion of point particles and geodesic deviation is sourced by the curvature of its Levi-Civita connection. On a huge variety of physical scales the predictions made on the basis of general relativity are outstandingly correct and in agreement with observation, however there are the well known shortcomings such as the rotational curves of galaxies and the accelerated expansion of the universe, which led to the introduction of the notions of dark matter and dark energy~\cite{Riess:1998cb,Peebles:2002gy,Corbelli:1999af,Clowe:2006eq}. The most common approach to understand and explain dark matter and dark energy is to postulate the existence of additional particles to the ones in the standard model of particle physics, and the alternative is to look for extensions and modifications in the description of gravity \cite{Papantonopoulos:2007zz}. In this article we follow the latter route and consider Finsler geometry as extended geometry of spacetime, which has been proposed as one possibility to shed light onto the dark universe phenomenology \cite{Hohmann:2016pyt,Basilakos:2013hua,Papagiannopoulos:2017whb,Chang:2009pa,Basilakos:2017rgc}.

Finsler spacetime geometry is the geometry of a manifold equipped with a so called Finsler function, which is a $1$-homogeneous function on the tangent bundle of spacetime and defines a length measure for curves. It thus is the most general geometry with a geometric clock in the sense of the clock postulate, namely that the time an observer measures between two events is given by the length of its worldline.\\
Finsler geometry, respectively the geodesics of a Finslerian geometry, describe the motion of point particles subject to a dispersion relation, which is non-quadratic in the particle's four momenta. Such modified dispersion relations (MDRs) appear most naturally in effective field theories in media. Examples are premetric electrodynamics \cite{Hehl} describing among other systems the electromagnetic field in crystals \cite{Perlick} or wave equations in solids \cite{Cerveny} which can be used to model earthquake waves \cite{Seismic1}. Moreover MDRs are used as an effective description of quantum gravity effects \cite{Amelino-Camelia:2014rga,Lobo:2016xzq,Letizia:2016lew,Barcaroli:2017gvg}, making spacetime effectively a medium whose origin lies in the four momentum dependent scattering of elementary particles with the graviton. More fundamentally MDRs emerge from field theories which break local Lorentz invariance \cite{Raetzel:2010je}, as studied in the standard model extension \cite{Bluhm:2005uj}, very special and very general relativity \cite{Cohen:2006ky,Gibbons:2007iu,Fuster:2018djw} or again in premetric resp. area metric electrodynamics \cite{Punzi:2007di,Punzi:2009}.

The appearance of Finsler geometry in physics, see \cite{Pfeifer:2019wus} for a review, raised two questions: the one about a suitable general mathematical definition of Finsler spacetimes, which covers the interesting instances appearing in the literature and the one whether it is possible to find dynamical equations which determine the Finslerian spacetime geometry in the same way as the Einstein equations determine the pseudo-Riemannian geometry of spacetime.

Regarding the first issue there are several proposals in the literature, see for example \cite{Asanov,Pfeifer:2011tk,Lammerzahl:2012kw,Javaloyes:2018lex, Bejancu-Farran}, which extend Beem's original suggestion \cite{Beem}. However none of them captures the whole variety of Finsler spacetime geometries under investigation. As part of this work we suggest a definition of Finsler spacetimes which distills the necessary features from the existing definitions and includes large classes of the Finsler spacetime geometries discussed in the literature.

Regarding the dynamical equations determining a Finslerian spacetime geometry, also numerous attempts have been made \cite{Chang:2009pa,Kouretsis:2008ha,Stavrinos2014,Asanov,Voicu:2009wi,Bucataru,Rutz,Pfeifer,Vacaru:2010fi,Minguzzi:2014fxa}. The difference in the approaches lies in the choice of the fundamental variable: the Finsler function or the Finsler metric tensor (accordingly in the type of the dynamical equation, scalar or tensorial) and in the way how the equation is obtained, by variation from an action, by formal resemblance to the Einstein equations or from further physical principles.

In this article we argue for physical dynamics for Finsler spacetime geometries which have the following properties:
\begin{itemize}
	\item the fundamental variable is the geometry defining Finsler function, i.e.\ the field equation we are looking for is a homogeneous scalar equation on the tangent bundle, which determines a homogeneous function;
	\item the homogeneity of the Finsler function, which can be understood as equivariance with respect to certain group actions, allows us to naturally treat Finsler functions as sections of a certain fiber space sitting over a compact manifold: the positive projective tangent bundle. This way, we can apply the classical apparatus of the calculus of variations (see, e.g., \cite{Krupka-book});
	\item the field equation is obtained by variational means, starting from a well defined action integral;
	\item the geometrical structures used are as simple as possible and are constructed from the Finsler function alone;
	\item in the case when the Finslerian spacetime geometry is pseudo-Riemannian, the dynamics become equivalent to the dynamics determined by the Einstein equations
	\begin{align}
	G_{ij}=\dfrac{8\pi G}{c^{4}}T_{ij},  \label{full_einstein_eq}
	\end{align}
	where $G_{ij}=r_{ij}-\dfrac{1}{2}rg_{ij}$ are the components of the Einstein tensor built from the components of the Ricci tensor $r_{ij}$ and the Ricci scalar $r$ of the Levi-Civita connection of the Lorentzian spacetime metric with components~$g_{ij}$.\footnote{In order to distinguish Riemannian curvature-related geometric objects from Finslerian ones (as some of them have different definitions), we denote the curvature tensor in a pseudo-Riemannian space by $r$ (small letter) and the Finslerian curvature-related quantities by capital letters.}
\end{itemize}
We prove that the most promising conjectured Finsler spacetime dynamics, the one by Rutz \cite{Rutz} and the one by Pfeifer and Wohlfarth \cite{Pfeifer} (which was independently also found by Chen and Shen in the context of positive definite Finsler geometry \cite{Chen-Shen}), are actually related in the way that the latter is the \emph{variational completion} of the former. A similar property can be found in the emergence of the Einstein equations. An early version of the Einstein equations was simply stating that $r_{ij} \sim T_{ij}$. It has been shown that the left hand side of this equation cannot be obtained by variational calculus, not even in the vacuum case $T_{ij}=0$, and its variational completion is given by the Einstein tensor~\cite{Voicu-Krupka}. Hence by the demand of a variational equation for the Finsler function as fundamental variable of a Finslerian spacetime geometry, the simplest self consistent action based field equations are the ones which were derived in~\cite{Pfeifer}.

We establish this result in the following way. The variational completion algorithm is based on the notion of Vainberg-Tonti Lagrangian associated to a given system of partial differential equations. This Lagrangian is determined solely by the PDE system and in the case when the given PDE system is variational, it is the Lagrangian that admits this PDE system as its Euler-Lagrange equations. We find that the Vainberg-Tonti Lagrangian (regarded as a differential form) corresponding to Rutz's equation is the product between the (canonical, 0-homogeneous) Finslerian Ricci scalar and the canonical volume form built from the Finsler metric on the projective tangent bundle. Based on this Lagrangian we construct the action integral for the Finsler function on compact subsets of the positive projective tangent bundle, in order to correctly handle all technical difficulties which appear due to the homogeneity properties of the geometric objects involved. Variation of the obtained action with respect to the Finsler function, then yields the desired field equations. In the end it turns out that they are identical to the ones found in \cite{Pfeifer}, where the action was formulated on the unit tangent bundle defined by the Finsler function. We choose the positive projective tangent bundle as manifold on which the action is defined here, since, first, its fibers are compact (while the Finsler spacetime unit tangent bundle does not have this property), and, second, it avoids an intertwining between the definition of the integration domain and the fundamental dynamical variable, as it is present in the unit tangent bundle approach. Thus we add the missing puzzle pieces in mathematical rigor to action based dynamics for Finsler gravity and confirm the closest Finsler generalization of the Einstein--Hilbert action to be defined by the $0$-homogenized canonical Finsler curvature scalar.

The article is structured as follows. We begin by giving the basic definition of Finsler spacetimes and their geometry in Section \ref{sec:FST}. In Section \ref{sec:FSTdyn} we review the Finsler gravity field equation conjectured by Rutz and the one developed by Pfeifer and Wohlfarth. Afterwards in Section \ref{sec:FGraviAction} we introduce the positive projective tangent bundle as the stage where we formulate the Finsler gravity action. The main result is then presented in Section \ref{sec:RutzVT} where we show that the variational completion of Rutz equation is given by the one developed Pfeifer and Wohlfarth. We confirm the field equation by variational calculus in Section \ref{sec:vari}, before we show its consistency with the Einstein equations in the case that the Finsler spacetime geometry is a pseudo-Riemannian geometry in Section \ref{sec:metric}. Finally Section \ref{sec:conc} is devoted to conclusions. The main part of the article is supplemented by several appendices. In Appendix \ref{app:Conv} we prove the convexity of the set of timelike vectors on Finsler spacetimes. Two particular classes of Finsler spacetimes are discussed in Appendix~\ref{app:Ex}. Two lengthy integrals are evaluated in Appendix~\ref{app:I2I3}. The proof of Lemma \ref{lem:intbdry} is deferred to Appendix~\ref{app:Lemmaint}.

\section{Finsler spacetimes}\label{sec:FST}

We begin by stating the basic notations and definitions of Finsler geometry we use throughout this article. In section~\ref{ssec:fstdef} we provide the definition of Finsler spacetimes we employ in this article. We review their geometry in section~\ref{ssec:fstgeom}.

\subsection{The Definition}\label{ssec:fstdef}
Let $M$ be a connected, oriented, $\mathcal{C}^{\mathcal{\infty }}$-smooth manifold of dimension $4$ and $(TM,\pi_{TM},M)$, its tangent bundle. Let $\left\{ (U_{\alpha },\varphi _{\alpha })\right\}$ be an oriented atlas on $M$. We denote by $(x^{i})_{i=\overline{0,3}}$ the coordinates of a point $x\in M$ in a local chart $(U_{\alpha },\varphi _{\alpha })$; denoting, for any vector $\dot{x}\in T_{x}M$, by $(\dot{x}^{i})$ the coordinates in the local natural basis $\{\partial_i=\partial /\partial x^{i}\}$ of $T_{x}M$, we obtain, for a point $(x,\dot{x})\in \pi _{TM}^{-1}(U_{\alpha })\subset TM$, the coordinates $(x^{i},\dot{x}^{i})_{i=\overline{0,3}};$ then, $\left\{ (\pi_{TM}^{-1}(U_{\alpha }),d\varphi _{\alpha })\right\} $ is an oriented atlas on $TM$. We will denote by $_{,i}$ and $_{\cdot i}$ partial differentiation with respect with $x^{i}$ and $\dot{x}^{i}$ respectively.

By $\mathcal{F}(TM)$, we will mean the set of $\mathcal{C}^{\infty }$-smooth functions on $TM$. For any fibered manifold \(\pi: Y \to X\), we will denote by $\Gamma (Y)$ the module of sections of $Y$ and by $\Omega (Y),$ the set of differential forms on $Y$.

A \textit{conic subbundle} of $TM$ is, \cite{Javaloyes:2018lex}, a non-empty open submanifold $\mathcal{Q}\subset TM\backslash \{0\}$, with the following properties:
\begin{itemize}
	\item $\pi _{TM}(\mathcal{Q})=M$;
	\item \textit{conic property:} if $(x,\dot{x})\in \mathcal{Q}$, then, for any $\lambda >0:$ $(x,\lambda \dot{x})\in \mathcal{Q}$.
\end{itemize}
The first condition above ensures that $(\mathcal{Q},\pi _{TM|\mathcal{Q}},M)$ has a fibered manifold structure.

We formulate and employ a definition of Finsler spacetimes which is distilled from previous generalizations \cite{Pfeifer:2011tk,Lammerzahl:2012kw,Javaloyes:2018lex} of Beem's original definition \cite{Beem}, to include a most complete variety of indefinite Finsler length measures discussed in the literature.
\begin{definition}\label{def:FST}
By a \textit{Finsler spacetime}, we understand a pair $(M,L)$, where $L:TM\rightarrow \mathbb{R}$ is a continuous function, called the Finsler-Lagrange function, which satisfies:
	\begin{itemize}
		\item $L$ is positively homogeneous of degree two with respect to $y$: $L(x,\lambda \dot x) = \lambda^2 L(x,\dot x)$;
		\item $L$ is smooth and the vertical Hessian of $L$ (called $L$-metric $g^L$)
		\begin{align}\label{g_def}
			g^L_{ij}=\dfrac{1}{2}\dfrac{\partial ^{2}L}{\partial \dot{x}^{i}\partial \dot{x}^{j}} = \dfrac{1}{2} L_{\cdot i\cdot j}
		\end{align}
		is non-degenerate on a conic subbundle $\mathcal{A}$ of $TM$ such that  $TM\setminus \mathcal{A}$ is of measure zero;
		\item there exists a connected component $\mathcal{T}$ of the preimage $L^{-1}((0,\infty))\subset TM$, such that on $\mathcal{T}$ the $L$-metric $g^L$ exists, is smooth and has Lorentzian signature $(+,-,-,-)~$\footnote{It is possible to equivalently formulate this property with opposite sign of $L$ and metric $g^L$ of signature $(-,+,+,+)$. We fixed the signature and sign of $L$ here to simplify the discussion.}
		\item the Euler-Lagrange equations
		\begin{align}\label{eq:EL}
		\frac{d}{d \tau} \dot{\partial}_i L - \partial_i L = 0\,.
		\end{align}
		have a unique local solution for every initial condition $(x,\dot x)\in \mathcal{T}\cup \mathcal{N}$, where $\mathcal{N}$ is the kernel of $L$. At points of $\mathcal{N}$ where the $L$-metric degenerates the solution must be constructed by continuous extension. This means that the geodesic equation coefficients admit a $\mathcal{C}^{1}$ extension at those points.
	\end{itemize}
\end{definition}
The difficulty in the definition of Finsler spacetimes emerges from the existence of four conic subbundles of $TM\setminus\{0\}$ which characterize the properties of the indefinite Finsler geometry:
\begin{itemize}
	\item $\mathcal{A}$: the subbundle where $L$ is smooth and $g^L$ is non-degenerate, with fiber $\mathcal{A}_{x} = \mathcal{A} \cap T_xM$, called the set of \emph{admissible vectors},
	\item $\mathcal{N}$: the subbundle where $L$ is zero, with fiber $\mathcal{N}_x = \mathcal{N} \cap T_xM$,
	\item $\mathcal{A}_0 = \mathcal{A}\setminus\mathcal{N}$: the subbundle where $L$ can be used for normalization, with fiber  $\mathcal{A}_{0x} = \mathcal{A}_0 \cap T_xM$,
	\item $\mathcal{T}$: a maximally connected conic subbundle where $L > 0$, the $L$-metric exists and has Lorentzian signature $(+,-,-,-)$, with fiber $\mathcal{T}_x = \mathcal{T} \cap T_xM$.
\end{itemize}
The most important part of the definition is the existence of the subbundle $\mathcal{T}$ which ensures the existence of a convex cone $\mathcal{T}_x$ in each tangent space $T_xM$. A detailed proof of the convexity of $\mathcal{T}_x$ can be found in Appendix \ref{app:Conv}. The set $\mathcal{T}_x$ can be interpreted as set of future pointing timelike directions, which are allowed as tangent vectors to the trajectories of physical observers; see~\cite[Sec. V]{Pfeifer} for illustrative examples.

The relation of the subbundles $\mathcal{A}, \mathcal{N}, \mathcal{A}_0$ and $\mathcal{T}$ is basically what distinguishes the earlier definitions of Finsler spacetimes \cite{Pfeifer:2011tk,Lammerzahl:2012kw,Javaloyes:2018lex} and our new one. Here we obviously have that $\mathcal{T}\subset \mathcal{A}_0$ and $\mathcal{A}_0 \subset \mathcal{A}$, but require nothing else. Thus, in particular $L$ is smooth and $g^L$ is non-degenerate on all of $\mathcal{T}$, which ensures the existence of all geometric objects, introduced in the next section, on $\mathcal{T}$. We do not demand any relation between $\mathcal{N}$ and $\mathcal{A}$, hence $L$ may be not differentiable both along directions where $L(x,\dot x)\neq 0$ and where $L(x,\dot x)=0$.

Our definition includes large classes of Finsler spacetimes according to the older definitions.\footnote{The ones considered in \cite{Caponio:2017lgy,Caponio:2015hca} do not fit in our definition. We do not consider these Finsler spacetimes since for them the curvature tensor, which defines the dynamics of Finsler spacetimes, is not necessarily defined for all physical observer directions, which in our definition is given by the conic subbundle $\mathcal{T}$. The definition could be relaxed so as to include the possibility of having an observer direction where curvature is not defined, but in this case, a thorough analysis of whether the evolution of spacetime is causal, as seen by the respective observer, is needed. This is the subject for future work.}. It allows for example Finsler spacetime geometries of
\begin{itemize}
	\item Randers type $F = \sqrt{|g_{ab}\dot x^a \dot x^b|} + A_c\dot x^c$,
	\item Bogoslovsky/Kropina type $F = (|g_{ab}\dot x^a \dot x^b|)^{\frac{1-q}{2}}(A_c(x)\dot x^c)^{q}$,
	\item polynomial $m$-th root type $F = |G_{a_1 \cdots a_m}(x)\dot x^{a_1}\ldots\dot x^{a_m}|^{\frac{1}{m}}$.
\end{itemize}
Surely, the tensor fields $g$, $A$ and $G$ can not be arbitrary to obtain a viable Finsler spacetime. In Appendix~\ref{app:Ex} we briefly discuss conditions they have to satisfy so that the resulting $L$ fits into our definition. From the viewpoint of physics the Randers class describes the motion of a charged particle in an electromagnetic potential, the Bogoslovsky class is considered under the term very special or very general relativity~\cite{Fuster:2018djw,Gibbons:2007iu,Cohen:2006ky,Bogoslovsky} and the polynomial class, for $m=4$, describes the propagation of light in premetric electrodynamics \cite{Hehl,Rubilar:2007qm,Punzi:2007di,Gurlebeck:2018nme}. An extensive discussion on examples of our definition of Finsler spacetimes will be the topic of a forthcoming paper, and is not the main subject of this work, which deals with dynamical equations for Finsler spacetime geometries.

The Finsler function $F$, usually employed in standard textbooks about Finsler geometry \cite{Shen,Bucataru}, is defined as $F = \sqrt{|L|}$ and the length measure for curves $\gamma:[a,b]\mapsto M$ on $M$ is given by
\begin{align}\label{eq:length}
	\ell[\gamma] = \int_a^b F(\gamma,\dot{\gamma})\ d\tau\,.
\end{align}

\subsection{The Geometry}\label{ssec:fstgeom}
The geometry of Finsler spacetimes is constructed from objects obtained from derivatives acting on $L$. All details on geometry based on non-linear connections and of Finsler spacetimes can be found in the books \cite{Giachetta-book,Bucataru,Chern-Shen-Lam}. Here we recall the notions we need throughout this article. On the basis of our definition of Finsler spacetimes all objects are defined on the bundle $\mathcal{A}$ and not on all of $TM$.

In any local chart of $\mathcal{A}$ the first derivative of $L$ w.r.t. $\dot x$ defines the momenta, or lower index velocities,
\begin{align}
	p_{(x,\dot x)} = \dot x_i dx^i,\quad \dot x_i = \frac{1}{2}L_{\cdot i}\,,
\end{align}
the second derivatives of $L$ define the $L$-metric and its inverse
\begin{align}
	g^L_{(x,\dot x)} = g^L_{ij}(x,\dot x) dx^i \otimes dx^j,\quad g^L_{ij} = \frac{1}{2}L_{\cdot i\cdot j}\,,
\end{align}
and the third derivatives, the so called \emph{Cartan tensor}
\begin{align}
	C_{(x,\dot x)} = C_{ijk}(x,\dot x) dx^i \otimes dx^j \otimes dx^k,\quad C_{ijk} = \frac{1}{2}g^L{}_{ij\cdot k} = \frac{1}{4} L_{\cdot i \cdot j \cdot k}\,.
\end{align}
By the homogeneity of $L$ the following equalities hold in every local coordinate chart on $\mathcal{A}$
\begin{align}
	L(x,\dot x) = g^{L}_{ij}(x,\dot x) \dot x^i \dot x^j,\quad L_{\cdot i}(x.\dot x) = 2 \dot x_i = 2 g^{L}_{ij}(x,\dot x) \dot x^j,\quad \dot x_i{}_{\cdot j} = g^{L}_{ij}(x,\dot x),\quad \dot x^i C_{ijk}(x,\dot x)=0\,.
\end{align}

The fundamental ingredient of the geometry of a Finsler spacetime is the geodesic spray, from which one obtains the canonical non-linear connection, defining parallel transport. The geodesic equation of \eqref{eq:length} in arclength parametrization can be written as
\begin{align}\label{eq:geod}
	\ddot \gamma^i + 2 G^i(\gamma,\dot{\gamma}) = 0\,,
\end{align}
where the geodesic spray coefficients are given by
\begin{align}\label{eq:geodspray}
	2G^i = \frac{1}{2}g^{L ij}(\dot x^k L_{,k\cdot j} - L_{,j})\,.
\end{align}
They define the coefficients $G^i{}_j$ of the canonical \emph{Cartan non-linear connection}, which will be understood as defining a splitting of the tangent bundle $(T\mathcal{A},\pi_{\mathcal{A}},\mathcal{A})$ of $\mathcal{A}$ into a \emph{vertical subbundle} $V\mathcal A = \mathrm{ker}(d\pi_{|_{\mathcal{A}}})$ and a \emph{horizontal subbundle} $H\mathcal{A}$ such that $T\mathcal A = H\mathcal A \oplus V\mathcal A $. The local adapted basis will be denoted by $(\delta_i, \dot{\partial}_i)$, where $\delta_i = \partial_i - G^j{}_i \dot{\partial}_j$ and $\dot{\partial}_i = \partial_{\dot x^i}$. The connection coefficients are defined as
\begin{align}\label{eq:nldef}
	G^i{}_j = G^i{}_{\cdot j}\,.
\end{align}
Besides the fundamental non-linear connection it is possible to define several linear connections on Finsler spacetimes. We do not regard these linear connections as fundamental but rather as tools to define tensorial quantities. For the purposes of this article we will use the so called Chern-Rund linear connection $\mathrm D$ on $TM$ restricted to $\mathcal{A}$. It is locally defined by
\begin{align}\label{eq:cherndef}
	\mathrm D_{\delta_k}\delta_j = \Gamma^i{}_{jk}\delta_i,\quad \mathrm D_{\delta_k}\dot{\partial}_j = \Gamma^i{}_{jk}\dot{\partial}_i,\quad \mathrm D_{\dot{\partial}_k}\delta_j = \mathrm D_{\dot{\partial}_k}\dot{\partial}_j = 0\,,
\end{align}
where $\Gamma^{i}{}_{jk}:=\frac{1}{2}g^{Lih}(\delta _{k}g^L_{hj}+\delta_{j}g^L_{hk}-\delta _{h}g^L_{jk}).$ We denote by $_{|i}$ $\mathrm D$-covariant
differentiation with respect to $\delta _{i}$. The difference between the derivative of the non-linear connection coefficients $G^i{}_{jk} = G^i{}_{j\cdot k}$ and the Chern-Rund connection coefficients $\Gamma^i{}_{jk}$ defines the \emph{Landsberg tensor} $P=P^i{}_{jk} \delta_{i} \otimes d x^j \otimes d x^k$, with
\begin{align}\label{eq:landsberg}
P^i{}_{jk} = G^i{}_{jk} -\Gamma^i{}_{jk},\quad \dot x^j P^i{}_{jk}(x,\dot x) = 0\,.
\end{align}

The geometric objects introduced so far satisfy some important identities regarding their differentiation w.r.t the Chern Rund connection and the dynamical covariant derivative $\nabla:\Gamma(T\mathcal{A})\rightarrow\Gamma(T\mathcal{A})$, which is attached to the non-linear connection~\cite{Bucataru}:
\begin{align}
	\delta_i L =  L_{|i} &= 0,\quad \nabla L = 0\\
	\dot x^i_{|j} &= 0,\quad \nabla \dot x^i = 0\\
	g^L_{ij|k} &= 0,\quad \nabla g^L_{ij} = 0\\
	\nabla C^i{}_{jk} &= P^i{}_{jk}\,.
\end{align}
The derivative operators are related by the identity $\nabla = \dot x^i \mathrm D_{\delta_i}$.

To understand the motivation of the Finsler gravity equation suggested by Rutz it is necessary to recall the geodesic deviation equation. Let $\gamma$ be a Finsler geodesic and $\hat \gamma = (\gamma, \dot \gamma)$ be its lift to the tangent bundle. Tangent vectors of $\hat \gamma$ are horizontal, i.e., $\dot{\hat \gamma} = \dot{\gamma}^i\delta_i$. Moreover let $V$ be a deviation vector field on spacetime with canonical horizontal lift $\hat V=V^i\delta_i$. Then the geodesic deviation equation is
\begin{align}\label{eq:GeodDevi}
	(\nabla \nabla \hat V)|_{(\gamma, \dot \gamma)} = \textbf{R}(\dot{\hat \gamma},\hat V)\,.
\end{align}
The geodesic derivation operator $\textbf{R} = R^i{}_jdx^j\otimes\delta_i$ is derived from the curvature of the non-linear connection as
\begin{align}\label{eq:nlcurv}
	R^i{}_j = R^i{}_{jk}\dot x^k,\quad R^i{}_{jk}\dot{\partial}_i = [\delta_j, \delta_k]= (\delta_kG^i{}_j - \delta_jG^i{}_k)\dot{\partial}_i\,.
\end{align}
The non-homogenized Finsler Ricci scalar $R$ is given by its trace
\begin{align}\label{eq:FinslerRicciS}
	R = R^i{}_i = R^i{}_{ik}\dot x^k\,.
\end{align}
It is important to observe that the curvature tensors appearing here are defined solely in terms of the canonical Cartan non-linear connection. The Finsler linear connections, which one may define, are not entering here.

In case the Finsler-Lagrange function takes the form $L=g_{ij}(x)\dot{x}^i\dot{x}^j$, where $g_{ij}(x)$ are the components of a Lorentzian metric, the geometry of a Finsler spacetime $(M,L)$ becomes essentially the geometry of the pseudo Riemannian spacetime manifold $(M,g)$. The $L$-metric becomes the Lorentzian metric, the Cartan tensor vanishes, the non-linear connection coefficients and the non-linear curvature tensor become the Christoffel symbols and the Riemann curvature tensor of the Levi-Civita connection of $g$, up to a contraction with a velocity $\dot x$. Observe that the Finsler Ricci scalar becomes $R(x,\dot x) = -r_{jk}(x)\dot x^j \dot x^k$ and is not equal to the Riemannian Ricci scalar $r = r_{ij}g^{ij}$ in this case.

When we construct an action for Finsler gravity in Section \ref{sec:vari} we will work on the positive projective tangent bundle with $0$-homogeneous objects. On $\mathcal{A}_0$ we can introduce the $0$-homogenized Ricci scalar\footnote{$R_0$ is commonly denoted by $Ric$ in the literature. We choose the subscript $0$ here to indicate it is an object on the set $\mathcal{A}_0$.}
\begin{align}
	R_0 = \frac{1}{L}R\,,
\end{align}
which will be the key ingredient to the Lagrangian density defining the gravity action. Additionally we need a canonical invariant $0$-homogeneous volume form on $\mathcal{A}_0$, which is given by
\begin{align}\label{eq:volA0}
	\textrm{Vol}_{0} = \frac{1}{L^2} |\det(g^L)| \textrm{Vol} = \frac{1}{L^2} |\det(g^L)| dx^0\wedge \ldots \wedge dx^3 \wedge d\dot x^0 \wedge \ldots \wedge d\dot x^3 \,,
\end{align}
where we use the abbreviation $\textrm{Vol} = dx^0\wedge \ldots \wedge dx^3 \wedge d\dot x^0 \wedge \ldots \wedge d\dot x^3$ for the local Euclidean volume form. This volume form is indeed $0$-homogeneous w.r.t. $\dot x$, which can be seen from the fact that
\begin{align}
	\mathcal{L}_{\mathbb{C}}\textrm{Vol}_{0} = 0\,,
\end{align}
where the Liouville vector field $\mathbb{C} = \dot x^i \dot{\partial}_i$ is the generator of the homotheties $(x,\dot x)\mapsto (x,\lambda \dot x)$. During the derivation of the Finsler gravity field equation the following divergence formulas for horizontal and vertical vector fields, $X=X^i\delta_i$ and $Y=Y^i\dot{\partial}_i$, on $\mathcal{A}_0$, turn out to be very useful:
\begin{align}
\mathrm{div}(X) \textrm{Vol}_{0} = \mathcal{L}_X(\textrm{Vol}_{0}) \Leftrightarrow \mathrm{div}(X) &= (X^i{}_{|i} - P_{i}X^{i}),\text{ with } P_i = P^j{}_{ji}\,,\label{divergence_horizontal}\\
\mathrm{div}(Y) \textrm{Vol}_{0} = \mathcal{L}_Y(\textrm{Vol}_{0}) \Leftrightarrow \mathrm{div}(Y) &= (Y^i{}_{\cdot i} + 2 C_{i}Y^{i} - \frac{4}{L}Y^i\dot x_i),\text{ with } C_i = C^j{}_{ji}\,,\label{divergence_vertical}
\end{align}
which imply for any $f:\mathcal{A}_0 \rightarrow \mathbb{R}$,
\begin{align}\label{eq:difnabla}
\mathrm{div}(f \dot x^i\delta_i) = \nabla f\,.
\end{align}
These equations were obtained by direct calculation from the volume form $\textrm{Vol}_{0}$.

\section{Finsler spacetime dynamics}\label{sec:FSTdyn}
The demand that Finsler spacetime dynamics shall use the Finsler function as fundamental variable selects among the conjectured Finsler spacetime dynamics to the ones suggested in \cite{Rutz,Chen-Shen,Pfeifer}. The first field equation which took the Finsler function as fundamental variable and was itself a scalar equation on the tangent bundle on the manifold was obtained by Rutz \cite{Rutz}. It was argued that from the geodesic deviation equation, one finds the relevant curvature structure of spacetime which causes tidal forces between neighboring trajectories, and that its trace is a suitable approach as gravitational vacuum field equation. The same argument was applied in the pseudo-Riemannian case by Pirani to obtain the Einstein vacuum field equations \cite{Pirani}. Rutz's equation simply states that the canonical non-linear Finsler curvature scalar \eqref{eq:FinslerRicciS} vanishes
\begin{align}\label{eq:rutz}
R = 0\,.
\end{align}
It measures the trace of the geodesic deviation operator \eqref{eq:GeodDevi}, understood as a function of the Finsler-Lagrange function $L$ and its derivatives.

Action based Finsler field equations using the Finsler function as fundamental variable have been obtained by calculus of variations in \cite{Pfeifer} and \cite{Chen-Shen} independently, in the first case Finsler spacetimes, in the later case for positive definite Finsler spaces. The action employed is
\begin{align}\label{eq:fgravaction}
	S[L] = \int_{\Sigma\subset TM} \mathrm{vol}(\Sigma) R_{|\Sigma}\,,
\end{align}
where $\Sigma = \{(x,\dot x)\in TM|F(x,\dot x) = 1\}$ is the unit tangent bundle and $\mathrm{vol}(\Sigma)$ the volume form on $\Sigma$ defined from the Finsler metric. Variation with respect to $L$ yields the Finsler spacetime vacuum dynamics\footnote{Observe that in \cite{Pfeifer}, the Landsberg tensor, called $S^i{}_{jk}$ there, is defined with a different sign, $S^i{}_{jk} = - P^i{}_{jk}$.}
\begin{align}\label{eq:fgraveq}
	2R - \frac{L}{3}g^{Lij}R_{\cdot i \cdot j} + \frac{2L}{3}g^{Lij}( (\nabla P_{i})_{\cdot j} + P_{i|j} - P_{i}P_{j})&= 0\,.
\end{align}

In the particular case of a pseudo-Riemannian Finsler-Lagrange function determined by a Lorentzian metric $g$, Rutz's equation as well as the action based Finsler spacetime dynamics are equivalent to the Einstein vacuum equations $r_{ab} = 0$. For the action based spacetime dynamics it is possible to add a matter field action so that the resulting gravitational dynamics reduce  to the Einstein equations~\eqref{full_einstein_eq} \cite{Pfeifer}.

We will see in section \ref{sec:RutzVT} in detail that Rutz's equation has the disadvantage that it can not be obtained as an Euler-Lagrange equation. However, applying the variational completion algorithm developed in \cite{Voicu-Krupka} to \eqref{eq:rutz} yields the field equations \eqref{eq:fgraveq}. The analogue statement holds for the field equations $r_{ij} = 0$ and $G_{ij} = 0$ in general relativity: the former have the disadvantage that they can not be obtained as Euler-Lagrange equations, while the latter are the result of the variational completion algorithm applied to the former. Thus only the latter can be completed consistently to non-vacuum dynamics.

\section{The stage for a Finsler gravity action}\label{sec:FGraviAction}
As we have seen in the previous sections, all geometric objects in Finsler geometry possess homogeneity properties with respect to their dependence in $\dot x$. This means that they are equivariant under the action of a Lie group, which makes it more appropriate to describe them on a bundle that takes this equivariance into account. In the previous approaches to action based Finsler gravity equations \cite{Pfeifer,Chen-Shen} this fact was taken care of by constructing an action on the unit tangent bundle $\Sigma = \{(x,\dot x)\in TM|F(x,\dot x) = 1\}$. However this construction has the drawback, that the domain of integration depends on the dynamical variable one is interested in, and, in the case of a Lorentzian signature of the Finsler metric, is non-compact. As consequence action integrals, formulated as integrals over all of $\Sigma$ are generically infinite.

To avoid these problems we construct the action integral for Finsler gravity on compact subsets of the projective tangent bundle $PTM^+$ in Section \ref{sec:vari}. The advantage is that $PTM^+$ can be defined without any further structure on $TM$, and so is in particular independent of the Finsler function. Here we introduce the positive projective tangent bundle and how one can understand the Finsler function as section of an associated vector bundle over $PTM^+$. This construction allows a mathematically rigorous formulation of the Finsler gravity action as well as a technically precise derivation of the Euler-Lagrange equations from the action.

\subsection{The positive, or oriented, projective tangent bundle}
The \emph{positive projective tangent bundle} $PTM^+$, also called oriented tangent bundle in the literature \cite{Krupka-Notes}, can be constructed from the slit tangent bundle $\overset{\circ }{TM}:=TM\backslash \{0\}$ by identifying a ray $\left\{ (x,\lambda \dot{x})~|~\lambda >0\right\}$ as a single point. In other words it is defined by the equivalence relation
\begin{align}
(x,\dot{x})\ \sim\ (x,w) \quad\Leftrightarrow\quad w=\lambda \dot{x} \textrm{ for some } \lambda >0
\end{align}
between elements $(x,\dot{x})$ and $(x,w)$ in $\overset{\circ }{TM}$. To be precise
\begin{align}
PTM^{+}:=\{[(x,\dot{x})]_{\sim }~|~(x,\dot{x})\in \overset{\circ }{TM\}}\,.
\end{align}
For a $4$-dimensional base manifold $M$, the positive projective tangent bundle is itself a $7$-dimensional manifold, with manifold structure given by an atlas $\left\{ \left( U_{i}^{+},\varphi _{i}^{+}\right) ,\left( U_{i}^{-},\varphi _{i}^{-}\right) \right\}$, where, e.g., $U_{i}^{+}=\left\{ [(x^{0},\ldots,x^{3},\dot{x}^{0},\ldots,\dot{x}^{i},\ldots,\dot{x}^{3})]~|~\dot{x}^{i}>0\right\}$, $U_{i}^{-}=\left\{ [(x^{0},\ldots,x^{3},\dot{x}^{0},\ldots,\dot{x}^{i},\ldots,\dot{x}^{3})]~|~\dot{x}^{i}<0\right\}$ and we will denote collectively $\varphi _{i}^{+}$ and $\varphi _{i}^{-}$ as $(x^{j},u^{\alpha})$, $\alpha =1,2,3$:
\begin{align}\label{eq:ptmcoord}
(x^{i},u^{\alpha })=\left(x^{0},\ldots,x^{3},\dfrac{\dot{x}^{0}}{\dot{x}^{i}},\ldots,
\dfrac{\dot{x}^{i-1}}{\dot{x}^{i}},\dfrac{\dot{x}^{i+1}}{\dot{x}^{i}},\ldots,
\dfrac{\dot{x}^{3}}{\dot{x}^{i}}\right)\,.
\end{align}%
Alternatively, one can locally describe $PTM^+$ in homogeneous coordinates $(x^i,\dot x^i)$ in which one can perform calculations basically as on $TM$. To do so one has to ensure that the objects one is dealing with on $TM$ can be identified with well defined objects on $PTM^+$, in particular they must be $0$-homogeneous with respect to~$\dot x$~\cite{Chern-Shen-Lam}.

The manifold $PTM^{+}$ is compact and orientable. This can be seen easily, e.g., as $PTM^{+}$ is diffeomorphic to the unit sphere bundle of an arbitrarily chosen (positive definite) Riemannian metric on $M$.

The positive projective tangent bundle is defined without any reference to additional geometric structure on $\overset{\circ }{TM}$. However, if the slit tangent bundle is equipped with a classical, smooth and positive definite Finsler function, $PTM^{+}$ (then also called projective sphere bundle \cite{Shen}) is diffeomorphic to the unit tangent (or unit sphere) bundle $\Sigma$. For Finsler spacetimes, and in general for Finsler functions with associated metric of indefinite signature, such a global diffeomorphism does not exist. What however does exist is a diffeomorphism between certain compact subsets $D^+$ of $PTM^{+}$ and specific compact subsets $D$ of the unit tangent bundle $\Sigma \subset \overset{\circ }{TM}$. We will now construct such diffeomorphisms and relate the integration over admissible and non lightlike domains $D^+\subset PTM^+$ to the integration over $D\subset \Sigma$.

The first step towards this goal is to observe that $\overset{\circ }{TM}$ with the action of the multiplicative group $\mathbb{R}^*_+$
\begin{align}
	\cdot: \overset{\circ }{TM} \times \mathbb{R}_+^*  \rightarrow \overset{\circ }{TM},\quad (x,\dot x)\cdot \lambda = (x,\lambda\dot x)\,,
\end{align}
is a principal bundle over $PTM^{+}$. This can be seen easily, as $\mathbb{R}_{+}^*$ acts freely and transitively on the fibers of $\overset{\circ }{TM}$ relative to the
projection:
\begin{align}
\pi^+: \overset{\circ}{TM}\ \rightarrow PTM^+,\quad (x,\dot{x}) \mapsto [ x,\dot{x}] .  \label{projection_PTM+}
\end{align}
The local fibers of $(\overset{\circ }{TM},\pi ^{+},PTM^{+}, \mathbb{R}^*_+)$ are diffeomorphic to the 1-dimensional Lie group $\mathbb{R}^*_+$, whose action is generated by
the Liouville vector field $\mathbb{C}=\dot{x}^{i} \dot{\partial}_{i}$. That is, $\mathbb{C}$ is a $\pi ^{+}$-vertical vector field on $\overset{\circ }{TM}$.

Second, the projection $\pi^+$ allows us to treat differential forms on $PTM^+$ as certain particular differential forms on $\overset{\circ }{TM}$ via pullback. Let $\rho^+$ be a differential form on $PTM^+$, then $\rho:=(\pi^+)^*\rho^+$ is a basic form with respect to $\pi^+$, i.e. it satisfies:
\begin{itemize}
	\item equivariance with respect to the action of the Lie group $(\mathbb{R}^*_+,\cdot)$, or, in other words, it is $0$-homogeneous in $\dot x$
	\begin{align}\label{Lie deriv rho}
		\mathcal{L}_{\mathbb{C}}\rho = 0\,,
	\end{align}
	\item horizontality with respect to the projection $\pi^+$, which means that contracted with $\mathbb{C}$ it satisfies
	\begin{align}\label{contraction_rho_C}
		\mathbf{i}_{\mathbb{C}}\rho =0.
	\end{align}
\end{itemize}
Exterior differentiation of forms $\rho^+\in\Omega(PTM^+)$ can be carried out identically to exterior differentiation of the corresponding form $\rho\in\Omega(\overset{\circ }{TM})$ as $d\circ (\pi^+)^*=( \pi ^+)^*\circ d$.

The third step is to realize that $\pi^+$ is a diffeomorphism between a compact admissible subset
\begin{align}
D \subset \{(x,\dot x)\in \mathcal{A}_0| F(x,\dot x) = 1\}
\end{align}
of the level hypersurface of the $1$-homogeneous smooth map $F$ on $\mathcal{A}_0$ and its image $\pi^+(D)\subset PTM^+$. This is easy to see. The mapping $\pi ^{+}:D\rightarrow \pi ^{+}(D)$, $\pi ^{+}(x,\dot{x})=[x,\dot{x}]$ is obviously bijective. Differentiability is also immediate. Since, on $D$, $F=1\not=0,$ the inverse map $\alpha :\pi^{+}(D)\rightarrow D,$ $\alpha ([x,\dot{x}])=(x,\tfrac{\dot{x}}{F(x,\dot{x})})$ of $\pi ^{+}$ is also differentiable.


Consequently we can state
\begin{lemma}\label{lemm:int}
	For any compact domain $D^{+}$ on $PTM^{+}$ such that $F( (\pi^+)^{-1}(D^+))\neq 0$ and any $7$-form $\rho^{+}$ on $PTM^{+}$:
	\begin{align}\label{rel_integrals_SM_PTM}
	\underset{D^+}{\int }\rho^+ = \underset{D}{\int }\rho\,,
	\end{align}
	where $\rho =(\pi^+)^*\rho^+$ is a $0$-homogeneous differential form on $\overset{\circ }{TM}$ and $D=(\pi^+)^{-1}(D^+)\cap \Sigma$.
\end{lemma}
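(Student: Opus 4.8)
The plan is to reduce the claimed identity to the change-of-variables formula for integrals of differential forms under a diffeomorphism, using the map $\pi^+$ restricted to a suitable compact domain. The main point already established in the text is that $\pi^+\colon D \to D^+$ is a diffeomorphism, where $D = (\pi^+)^{-1}(D^+)\cap\Sigma$; the only thing left is to transport the integral along this diffeomorphism and check that the pullback of $\rho^+$ under this restricted map agrees with the restriction to $D$ of the globally-defined pullback $\rho = (\pi^+)^*\rho^+$ on $\overset{\circ}{TM}$.

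First I would set $\psi := \pi^+|_D\colon D \to D^+$, which by the third step preceding the lemma is a diffeomorphism of $7$-manifolds-with-boundary (note $\dim PTM^+ = 7$ and $\dim\Sigma = 7$, so both $D$ and $D^+$ are $7$-dimensional, matching the degree of $\rho^+$). I would fix orientations: $D^+$ inherits an orientation from the orientable manifold $PTM^+$, and I transport it to $D$ via $\psi$, so that $\psi$ is orientation-preserving by construction; the resulting orientation on $D$ is the one implicitly used on the right-hand side of \eqref{rel_integrals_SM_PTM}. Then the standard change-of-variables theorem for forms gives
\begin{align}
\underset{D^+}{\int}\rho^+ = \underset{D}{\int}\psi^*\rho^+\,.
\end{align}
It remains to identify $\psi^*\rho^+$ with $\rho|_D$. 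Since $\psi = \pi^+\circ\iota$, where $\iota\colon D\hookrightarrow\overset{\circ}{TM}$ is the inclusion, functoriality of pullback yields $\psi^*\rho^+ = \iota^*(\pi^+)^*\rho^+ = \iota^*\rho = \rho|_D$, which is exactly what is needed to conclude $\underset{D^+}{\int}\rho^+ = \underset{D}{\int}\rho$.

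I would also remark, for completeness, why the object on the right is well-defined and why the condition $F((\pi^+)^{-1}(D^+))\neq 0$ is used: it guarantees that $(\pi^+)^{-1}(D^+)$ meets $\mathcal{A}_0$, so the inverse map $\alpha([x,\dot x]) = (x,\dot x/F(x,\dot x))$ is smooth there and $D$ is a genuine compact embedded submanifold of $\Sigma$ on which $\rho$, being smooth on all of $\overset{\circ}{TM}$, restricts to a smooth $7$-form; compactness of $D^+$ together with continuity of $\alpha$ gives compactness of $D$, so the integral converges. The $0$-homogeneity of $\rho$ (equivalently $\mathcal{L}_{\mathbb C}\rho = 0$) and its horizontality $\mathbf{i}_{\mathbb C}\rho = 0$ are not needed for the equality itself — they are the characterization of which $\rho$ arise as pullbacks $(\pi^+)^*\rho^+$ — but I would mention that they ensure the value $\int_D\rho$ is independent of the particular choice of section $D$ of $\pi^+$ over $D^+$, which is the conceptually important consequence.

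I expect no serious obstacle here; the statement is essentially a bookkeeping lemma. The only mild subtlety worth spelling out carefully is the orientation matching — making sure the orientation on $D$ that renders $\psi$ orientation-preserving is indeed the one intended in the paper's later computations — and the fact that $D$ and $D^+$ may be manifolds with boundary/corners, so one should invoke the change-of-variables formula in that (still completely standard) generality rather than for closed manifolds.
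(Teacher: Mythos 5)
Your proposal is correct and follows essentially the same route as the paper, which states the lemma as an immediate consequence of the diffeomorphism $\pi^+\colon D\to\pi^+(D)$ (with inverse $\alpha([x,\dot x])=(x,\dot x/F(x,\dot x))$) established just before, combined with the standard change-of-variables formula for pullbacks of forms. Your additional care with orientations, functoriality of the pullback, and the role of the condition $F((\pi^+)^{-1}(D^+))\neq 0$ merely makes explicit what the paper leaves implicit.
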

Hence we related integrals on $PTM^+$ to integrals on $\Sigma$. From the point of view of local computations, pulling back forms to $D \subset \Sigma \subset \overset{\circ }{TM}$ or working on $D^+$ in homogeneous coordinates are the same.

Observe that vector fields $X^+$ on $D^+\subset PTM^+$ can be identified with zero homogeneous vector fields $X$ on $(\pi^+)^{-1}(D^+)$, i.e.,\ vector fields satisfying $[X,\mathbb{C}] = 0$, via the tangent map $d(\pi^+)^{-1}$.

\subsection{Finsler spacetime action integrals on $PTM^+$}
On a Finsler spacetime $(M, L)$ integrals on $PTM^+$ can be understood as follows.

Any $7$-form $\rho^+$ can be decomposed into a product of the canonical volume form $dV_0^+$ on $PTM^+$ and a function. The canonical volume form can be obtained via the canonical \emph{Hilbert form} as follows. Consider the set
\begin{align}
	\mathcal{A}_0^+ = \pi^+(\mathcal{A}_0)\subset PTM^+\,.
\end{align}
An equivalent characterisation of $\mathcal{A}_0^+$ is $\mathcal{A}_0^+ = \pi^+(\Sigma \cap \mathcal{A})$.

The functions $l_i = \dot{\partial_i}F$ are well defined on $\mathcal{A}_0$ and, by their $0$-homogeneity with respect to $\dot x$, also on $\mathcal{A}_0^+$. This implies that the Hilbert form
\begin{align}
	\omega = l_i dx^i\,,
\end{align}
is a well defined coordinate invariant $1$-form on $\mathcal{A}_0^+$ and
\begin{align}
	\omega \wedge d\omega \wedge d\omega \wedge d\omega \neq 0
\end{align}
is a well defined $7$-form on $\mathcal{A}^+_0$ \cite{Chern-Shen-Lam}. Hence, a coordinate invariant, well-defined volume form on $\mathcal{A}_0^+$ is given by
\begin{align}
	dV_0^+ = \frac{1}{3!}\omega \wedge d\omega \wedge d\omega \wedge d\omega\,.
\end{align}
In local homogeneous coordinates on $\mathcal{A}_0^+$ it can be expanded as
\begin{align}
	dV_0^+ = \frac{|\det g^L|}{L^2} \mathbf{i}_{\mathbb{C}}(dx^0\wedge \ldots \wedge dx^3 \wedge d\dot x^0 \wedge \ldots \wedge d\dot x^3)\,.
\end{align}
The pullback of $dV_0^+$ by $\pi^+$ yields a $7$-form on $\mathcal{A}_0$, which can be expressed in terms of the $0$-homogeneous volume form $\textrm{Vol}_0$, see \eqref{eq:volA0},
\begin{align}
	 dV_0 := (\pi^+)^*dV_0^+ = \mathbf{i}_{\mathbb{C}}\textrm{Vol}_0= \frac{|\det g^L|}{L^2}(\mathbf{i}_{\mathbb{C}}\textrm{Vol})\,.
\end{align}
Thus in local homogeneous coordinates the coordinate expressions of $dV_0$ and $dV_0^+$ are identical \cite{Chern-Shen-Lam} and by abuse of notation we do not display the pullback explicitly in each expression.

Note that for $0$-homogeneous vector fields $X$ on $\overset{\circ }{TM}$, their divergence with respect to the volume form $dV^+_0$ resp. $dV_0$ is given by the same expressions as the divergences with respect to the volume form $\textrm{Vol}_0$ displayed in \eqref{divergence_horizontal} and~\eqref{divergence_vertical}.

Integrals on compact domains $D^+\subset \mathcal{A}^+_0$ can be written as integrals on $D\subset(\Sigma \cap \mathcal{A})$
\begin{align}\label{eq:actioninteg}
\underset{D^{+}}{\int }\ f\ dV_0^+ =\underset{D}{\int }\ f\circ \pi^+\ dV_0\,,
\end{align}
where $f$ is a function on $PTM^+$. In local homogeneous coordinates the expressions of $f$ and $f\circ \pi^+$, which is a $0$-homogenous function on $TM$, are identical. For us $f$ will be the Lagrange function which we will obtain from variational completion of Rutz's equation in Section \ref{sec:RutzVT}.

The last technical construction to write down the Finsler gravity action in Section \ref{sec:vari}, is to understand our dynamical variable $L$ as a section of a fibered manifold \cite{Krupka-book}. It turns out that the most natural such choice is an associated bundle to the principal bundle $(\overset{\circ }{TM},\pi^+, PTM^+)$, which we have already discussed. By the definition of the following equivalence relation on $\overset{\circ }{TM} \times \mathbb{R}^*_+$
\begin{align}\label{equivalence_rel_Y}
	(x,\dot x, y)\sim (x,\lambda \dot x, \lambda^2 y)
\end{align}
for all \(\lambda > 0\) we can construct the associated bundle $(Y, \pi_Y, PTM^+)$, with
\begin{align}\label{eq:Y}
	Y := (\overset{\circ }{TM} \times \mathbb{R}^*_+)/_{\sim},\quad \pi_Y([x,\dot x,y]) = [x,\dot x]
\end{align}
and fiber $\mathbb{R}^*_+$. Homogeneous coordinates corresponding to a fibered chart on this manifold then are $(x^i, \dot x^i , y)$. It is now easy to see that there is a one-to-one correspondence between $2$-homogeneous maps $L:TM \rightarrow \mathbb{R}$ and sections $\gamma$ of $Y$
\begin{align}
	L\mapsto \gamma: PTM^+ \rightarrow Y, \gamma([x,\dot x]) = [x,\dot x, L(x,\dot x)]\,.
\end{align}
This can be checked as follows. The mapping is well-defined, as, for any $(x,\lambda \dot{x})\in [x,\dot{x}],$ we have $[x,\lambda \dot{x},L(x,\lambda \dot{x})]=[x,\dot{x},L(x,\dot{x})]$ by virtue of \eqref{equivalence_rel_Y}. Its injectivity and surjectivity are immediate.

An important Lemma, inspired by a similar statement on Finsler spaces found in \cite{Chen-Shen}, which allows us to evaluate and manipulate the action integral later is
\begin{lemma}\label{lem:intbdry}
	Let $(M,L)$ be a general Finsler spacetime and let $f$ be a $0$-homogeneous function on $\mathcal{A}_0$. Moreover let $X$ be the vertical vector field $X = (L g^{Lij} f_{\cdot i})\dot{\partial}_j$, then the following identities on $\mathcal{A}_0^+$ hold:
	\begin{align}
		[g^{Lij} (Lf)_{\cdot i\cdot j} - 8 f] dV^+_0 &= d (\mathbf{i}_X dV^+_0)\label{eq:L3e1}\,,\\
		[L^{-1}g^{Lij}(L^2 f)_{\cdot i\cdot j} - 24 f]dV^+_0 &= d(\mathbf{i}_X dV^+_0)\label{eq:L3e2}\\
	\intertext{and}
		(g^{Lij} - 4 L^{-1} \dot x^i \dot x^j)(Lf)_{\cdot i\cdot j} dV^+_0 &= d (\mathbf{i}_X dV^+_0)\,.\label{eq:gxx}
	\end{align}
\end{lemma}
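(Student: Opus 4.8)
The plan is to reduce all three identities to a single divergence computation. The key structural remark is that $dV_0^+$ is a top-degree form on the $7$-dimensional manifold $\mathcal{A}_0^+$, hence closed, so Cartan's formula gives $\mathcal{L}_X dV_0^+ = d(\mathbf{i}_X dV_0^+) + \mathbf{i}_X d(dV_0^+) = d(\mathbf{i}_X dV_0^+)$ for any vector field $X$; on the other hand $\mathcal{L}_X dV_0^+ = \mathrm{div}(X)\,dV_0^+$ by the definition of the divergence, and by the remark preceding the lemma this divergence is computed on $\mathcal{A}_0^+$ by the same coordinate formula \eqref{divergence_vertical} as the one attached to $\textrm{Vol}_{0}$. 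It therefore suffices to show that $\mathrm{div}(X)$, for $X = (L g^{Lij}f_{\cdot i})\dot{\partial}_j$, equals each of the three bracketed expressions. As a preliminary one checks that $X$ really descends to $\mathcal{A}_0^+$: since $L$ is $2$-homogeneous, $g^{Lij}$ is $0$-homogeneous and $f_{\cdot i}$ is $(-1)$-homogeneous, the components $X^j$ are $1$-homogeneous, which is precisely the condition $[X,\mathbb{C}]=0$ for $X = X^j\dot{\partial}_j$.

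First I would establish \eqref{eq:L3e1}. Using \eqref{divergence_vertical}, $\mathrm{div}(X) = X^j{}_{\cdot j} + 2C_j X^j - \tfrac{4}{L}X^j\dot x_j$, I rewrite $X^j = g^{Lij}(Lf)_{\cdot i} - 2\dot x^j f$ (from $L_{\cdot i}=2\dot x_i$) and differentiate. The ingredients are the standard vertical identities $\dot x_{i\cdot j} = g^L_{ij}$, $g^{Lij}g^L_{ij}=4$, $\dot x^i C_{ijk}=0$ (whence $\dot x^i C_i = 0$), Euler's theorem $\dot x^i f_{\cdot i}=0$, and $g^{Lij}{}_{\cdot k} = -2 g^{Lia}g^{Ljb}C_{abk}$, which contracts to $g^{Lij}{}_{\cdot j} = -2C^i$ with $C^i := g^{Lij}C_j$. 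Substituting, the term $X^j{}_{\cdot j}$ produces $g^{Lij}(Lf)_{\cdot i\cdot j} - 2L C^i f_{\cdot i} - 8f$, the term $2C_j X^j$ produces $+2L C^i f_{\cdot i}$, and $\tfrac{4}{L}X^j\dot x_j$ reduces to $4\dot x^i f_{\cdot i}=0$; the Cartan-tensor contributions cancel and one is left with $\mathrm{div}(X) = g^{Lij}(Lf)_{\cdot i\cdot j} - 8f$, which is \eqref{eq:L3e1}.

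The remaining two identities then follow from \eqref{eq:L3e1} by purely algebraic manipulations of vertical derivatives, with no further reference to connections or curvature. For \eqref{eq:L3e2}, put $h := Lf$ (which is $2$-homogeneous) and expand $(Lh)_{\cdot i\cdot j} = 2g^L_{ij}h + 2\dot x_i h_{\cdot j} + 2\dot x_j h_{\cdot i} + L h_{\cdot i\cdot j}$; contracting with $g^{Lij}$ and using $g^{Lij}g^L_{ij}=4$ together with Euler's theorem $\dot x^i h_{\cdot i}=2h$ gives $g^{Lij}(L^2 f)_{\cdot i\cdot j} = 16 Lf + L\,g^{Lij}(Lf)_{\cdot i\cdot j}$, i.e.\ $L^{-1}g^{Lij}(L^2 f)_{\cdot i\cdot j} - 24 f = g^{Lij}(Lf)_{\cdot i\cdot j} - 8 f$. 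For \eqref{eq:gxx}, expand $(Lf)_{\cdot i\cdot j} = 2g^L_{ij}f + 2\dot x_i f_{\cdot j} + 2\dot x_j f_{\cdot i} + L f_{\cdot i\cdot j}$ and contract with $-4L^{-1}\dot x^i\dot x^j$; the four terms evaluate, using $\dot x^i\dot x^j g^L_{ij}=L$, $\dot x^i\dot x_i=L$, $\dot x^i f_{\cdot i}=0$ and $\dot x^i\dot x^j f_{\cdot i\cdot j}=0$ (the latter because $f_{\cdot i}$ is $(-1)$-homogeneous), to $-8f$, so $(g^{Lij}-4L^{-1}\dot x^i\dot x^j)(Lf)_{\cdot i\cdot j} = g^{Lij}(Lf)_{\cdot i\cdot j}-8f$. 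In both cases the right-hand side is $\mathrm{div}(X)$ by \eqref{eq:L3e1}, hence equals $d(\mathbf{i}_X dV_0^+)$.

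The main obstacle is the bookkeeping in the proof of \eqref{eq:L3e1}: expressing $g^{Lij}{}_{\cdot j}$ through the Cartan tensor trace and checking that every Cartan-tensor (and hence Landsberg) contribution cancels, so that the exact form admits the notably simple potential $\mathbf{i}_X dV_0^+$. One should also be careful to justify applying \eqref{divergence_vertical} on $\mathcal{A}_0^+$ rather than merely on $\mathcal{A}_0$ --- covered by the remark that the two divergences coincide for $0$-homogeneous vector fields --- and to note that $\mathbf{i}_X d(dV_0^+)=0$ holds trivially because $dV_0^+$ has top degree on the $7$-manifold.
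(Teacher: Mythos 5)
Your proof is correct and takes essentially the same route as the paper's: the same vertical-calculus ingredients (Euler's theorem, $g^{Lij}g^L_{ij}=4$, the Cartan-tensor trace $g^{Lij}{}_{\cdot j}=-2C^i$) and the same potential $\mathbf{i}_X dV_0^+$, the only cosmetic difference being that you package the cancellations as $\mathrm{div}(X)$ via \eqref{divergence_vertical} together with Cartan's formula, whereas the paper expands $g^{Lij}(Lf)_{\cdot i\cdot j}$ first and then differentiates the density $g^{Lij}f_{\cdot i}|\det g^L|/L$ directly, and it likewise reduces \eqref{eq:L3e2} (and \eqref{eq:gxx}) to \eqref{eq:L3e1} by the same algebraic expansions. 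As a minor bonus, you write out the verification of \eqref{eq:gxx} explicitly, which the paper's appendix leaves implicit.
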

The proof of the Lemma can be found in Appendix \ref{app:Lemmaint}.

An important consequence of this Lemma is that, for functions $\phi_{ij} = \phi_{ij}(x)$, integrating $4\phi_{ij}(x)\dot x^i \dot x^j L^{-1}$ is identical to integration of $\phi_{ij}g^{Lij}(x,\dot x)$ up to a boundary term. To see this simply consider functions $f = \frac{1}{2}\dot x^i \dot x^j \phi_{ij}(x) L^{-1}$ in~\eqref{eq:gxx} to equate
\begin{align}\label{eq:totdiv}
	(g^{Lij} - 4 L^{-1} \dot x^i \dot x^j)\phi_{ij} dV^+_0 &= d (\mathbf{i}_X dV^+_0)\,.
\end{align}
This consequence of Lemma  \ref{lem:intbdry} will be useful to analyze the matter coupling to gravity in the Language of Finsler geometry in the future.

This completes the discussion of the technical ingredients to apply the variational completion algorithm to Rutz's equation and to analyze the resulting field equations.

\section{Rutz's equation and its Vainberg-Tonti Lagrangian}\label{sec:RutzVT}
\emph{Canonical variational completion} \cite{Voicu-Krupka} is a powerful algorithm to assess whether a certain set of field equations can be obtained by variation of an action functional or not. In case it can locally be obtained by variation, the algorithm determines the action, and, in the contrary case, the algorithm determines a standard term to be added to the equations to make them variational. Before we apply the technique to Rutz's equation \eqref{eq:rutz}, we recall its main steps.

Consider a set of $m$ partial differential equations (PDEs) of order $r$ in the independent variables $x^A\in \mathbb{R}^N$ (regarded as coordinates in a local chart $U$ on some manifold X)  and the dependent variables $y^{\mu}=y^{\mu}(x^A)$
\begin{align}\label{general_PDE_system}
	\varepsilon_{\sigma} (x^A,y^{\mu},y^{\mu}{}_{A_1},\ldots,y^{\mu}{}_{A_1 \cdots A_r}) = 0\,,
\end{align}
where $A=1,\ldots,N$ and $\mu,\sigma=1,\ldots,m$. The subscripts on $y^{\mu}$ denote partial differentiation, i.e.\ $y^{\mu}_A = \partial_{x^A}y^{\mu}$ and so on. Note that the number $m$ of equations coincides with the number of dependent variables.

From equations \eqref{general_PDE_system}, we can build, on a given coordinate chart, the so-called \emph{Vainberg-Tonti Lagrangian density}
\begin{align}\label{VT_Lagrangian}
	\mathcal{L}= y^{\sigma} \int_0^1 \varepsilon_{\sigma}(x^A,ty^{\mu},ty_{A_1}^{\mu},\ldots,ty_{A_{1} \cdots A_{r}}^{\mu})dt\,.
\end{align}
The Vainberg-Tonti Lagrangian density $\mathcal{L}$ is the "closest" Lagrangian density to our PDE system, in the sense that, if equations \eqref{general_PDE_system} are locally variational, i.e., if they can be locally
written as the Euler-Lagrange equations attached to some Lagrangian density, then, this Lagrangian density is, up to a total derivative term, $\mathcal{L}$.

The quantities which measure the departure of the original PDE system of interest from being variational are the components of the so called \textit{Helmholtz form}
\begin{align}\label{eq:hhol}
	H_{\sigma }:=E_{\sigma }-\varepsilon _{\sigma }\,,
\end{align}
where
\begin{align}
	E_{\sigma } =\frac{\partial \mathcal{L}}{\partial y^{\sigma }}-d_{A_1}\frac{\partial \mathcal{L}}{\partial y^{\sigma}{}_{A_1}}+ \ldots +(-1)^{r} d_{A_1} \ldots d_{A_r}\dfrac{\partial \mathcal{L}}{\partial y^{\sigma}{}_{A_{1} \cdots A_{r}}}
\end{align}
are the Euler-Lagrange expressions of $\mathcal{L}$ formulated in terms of total derivative operators $d_A$ with respect to $x^A$. The following result is the key to examine if the original PDEs we started with were variational, see \cite{Voicu-Krupka}:
\begin{proposition}
	The PDE system \eqref{general_PDE_system} is locally variational if and only if, in any local chart the Helmholtz conditions
	\begin{align}
	H_{\sigma }=0,\quad \sigma =1, \ldots ,m\,,  \label{Helmholtz_conds}
	\end{align}
	hold.
\end{proposition}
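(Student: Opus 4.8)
The plan is to prove the two implications separately, both reduced to the classical inverse problem of the calculus of variations, which can be treated within a single coordinate chart.

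For the \emph{necessity} (``only if'') direction, I would assume $\varepsilon_\sigma = E_\sigma(\lambda)$ for some local Lagrangian density $\lambda$ of order at most $r$ and verify $H_\sigma = 0$ directly. Recall that $H_\sigma = E_\sigma(\mathcal L) - \varepsilon_\sigma$, where $\mathcal L$ is the Vainberg-Tonti density built from $\varepsilon$, and that the vanishing of $H_\sigma$ is equivalent to a finite set of closure-type identities on the partial derivatives $\partial\varepsilon_\sigma/\partial y^\nu{}_{A_1\cdots A_k}$ expressing an alternating symmetry under $\sigma\leftrightarrow\nu$ modulo total derivatives. The structural fact behind this direction is that the composition of the Euler-Lagrange operator with the Helmholtz operator is identically zero: substituting $\varepsilon=E(\lambda)$ into each such identity and repeatedly using that mixed partial derivatives commute and that the total-derivative operators $d_A$ commute with them, every term cancels in pairs. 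This is a purely formal computation, which I would organise according to the order of differentiation, checking the identity at each order $k$ in turn.

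For the \emph{sufficiency} (``if'') direction, I assume the Helmholtz conditions and form $\mathcal L = y^\sigma\int_0^1\varepsilon_\sigma(x^A,ty^\mu,ty^\mu{}_{A_1},\ldots)\,dt$. The claim to establish is $E_\sigma(\mathcal L)=\varepsilon_\sigma$, which exhibits $\varepsilon$ as variational with Lagrangian $\mathcal L$. I would differentiate under the integral sign, apply the iterated total derivatives $d_{A_i}$ and collect terms, using the chain-rule identity $\frac{d}{dt}\varepsilon_\sigma(x,ty,ty_A,\ldots)=\sum_k y^\mu{}_{A_1\cdots A_k}\,\frac{\partial\varepsilon_\sigma}{\partial y^\mu{}_{A_1\cdots A_k}}$ evaluated at $(x,ty,\ldots)$. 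After reorganisation, $E_\sigma(\mathcal L)$ splits as $\int_0^1\frac{d}{dt}\!\left(t\,\varepsilon_\sigma(x,ty,\ldots)\right)dt=\varepsilon_\sigma$ plus a remainder assembled entirely from the Helmholtz expressions of $\varepsilon$, which vanishes by hypothesis; hence $E_\sigma(\mathcal L)=\varepsilon_\sigma$.

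The main obstacle is the combinatorial bookkeeping in the sufficiency step: for arbitrary order $r$ one has to keep track of the symmetrisation over the multi-indices $A_1\cdots A_k$, of the binomial factors produced when iterated $d_A$'s act on products, and of the precise way the leftover terms reassemble into the Helmholtz expressions rather than into some unrelated combination. The tidy way to control this is to argue invariantly: view $\varepsilon_\sigma$ as the components of a source (dynamical) form, recognise the Vainberg-Tonti integral as the homotopy operator attached to the fibrewise dilation $y\mapsto ty$ --- contraction with the vertical Liouville field followed by integration in $t$ --- and invoke the fibrewise Poincar\'e lemma for the interior Euler operator. Local exactness of the variational sequence at the level of source forms then yields $\ker(\text{Helmholtz})=\mathrm{im}(\text{Euler-Lagrange})$ on a chart, which is exactly the assertion; the qualifier ``locally'' in the Proposition reflects precisely that this homotopy argument requires a chart with linear, hence contractible, fibres.
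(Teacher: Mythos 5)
The paper itself does not prove this proposition: it is quoted from the cited work of Voicu and Krupka, so there is no internal proof to compare against. Your sketch reproduces, in outline, the standard argument behind that citation --- the Vainberg--Tonti density as the image of the fibrewise homotopy operator, the identity that Euler--Lagrange expressions automatically satisfy the classical Helmholtz (closure) identities, and local exactness of the variational sequence at the source-form level --- and that route is correct, modulo the multi-index bookkeeping you explicitly defer.

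One organisational caveat. With the paper's definition $H_{\sigma}=E_{\sigma}(\mathcal{L})-\varepsilon_{\sigma}$, where $\mathcal{L}$ is the Vainberg--Tonti density of $\varepsilon$, the ``if'' direction is immediate: $H_{\sigma}=0$ literally states that $\varepsilon_{\sigma}$ are the Euler--Lagrange expressions of $\mathcal{L}$, so $\mathcal{L}$ itself is the required Lagrangian. All the content sits in the ``only if'' direction. Your opening step, ``recall that the vanishing of $H_{\sigma}$ is equivalent to the classical closure-type identities on $\partial\varepsilon_{\sigma}/\partial y^{\nu}{}_{A_{1}\cdots A_{k}}$'', is therefore not something to be recalled: that equivalence is essentially the proposition, and one half of it is precisely the homotopy computation you only perform later under ``sufficiency''. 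The non-circular ordering is: (i) show that the Euler--Lagrange expressions of an arbitrary local Lagrangian satisfy the classical Helmholtz identities (your pairwise-cancellation computation, i.e.\ Helmholtz composed with Euler--Lagrange vanishes); (ii) show by differentiating under the integral and integrating by parts in $t$ that $E_{\sigma}(\mathcal{L})-\varepsilon_{\sigma}$ is a sum of classical Helmholtz expressions of $\varepsilon$ evaluated along the homotopy $y\mapsto ty$, so that whenever those identities hold one gets $H_{\sigma}=0$; combining (i) and (ii) gives ``locally variational $\Rightarrow H_{\sigma}=0$'', and the converse is trivial as noted. Both ingredients appear in your proposal, so this is a matter of reordering rather than a missing idea; your closing remark that the qualifier ``locally'' reflects the star-shapedness of the chart fibres needed for the homotopy is also the right observation.
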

The \emph{canonical variational completion}, see again \cite{Voicu-Krupka}, of the PDE system $\varepsilon_{\sigma} = 0$ are the PDEs
\begin{align}\label{canonical var completion}
	E_{\sigma} = 0\,.
\end{align}
The term \textit{canonical} comes from the fact that, adding to the left hand sides of equations \eqref{canonical var completion} any locally variational term will still result in a variational PDE system. But \eqref{canonical var completion} are the closest variational equations to the initial ones, as indicated by \eqref{eq:hhol} and \eqref{Helmholtz_conds}. In particular, they are completely determined by the functions $\varepsilon_{\sigma}$ alone.

To illustrate the framework notice that a typical example of variational completion is the derivation of the completion of the Einstein vacuum equations. On a Lorentzian manifold $(M,g),$ the canonical variational
completion of the equations $r_{ij}=0$ are the full equations $r_{ij}-\frac{1}{2}rg_{ij}=0$ \cite{Voicu-Krupka}.

Finally let us apply the canonical variational completion to Rutz's equation \eqref{eq:rutz}. The setup is
\begin{align}
	X=D^+ \subset PTM^+,\quad x^A=(x^i,\dot x^i),\quad y = y^{1} = L\ \textrm{(i.e.\ $\mu=1$)}\,,
\end{align}
where the coordinates $x^A$ are again homogeneous coordinates \cite{Chern-Shen-Lam}.

In order to get a correct scalar density, let us multiply Rutz's equation by  $\left\vert \det g^L\right\vert$. In addition we multiply it by $L$ to an arbitrary power $\alpha$, in order to be able to adjust the homogeneity of the desired Lagrange density later. Thus, Rutz's equation becomes:
\begin{align}
\varepsilon :=L^{\alpha}R\left\vert \det g^L\right\vert =0\,.
\label{Rutz_eqn_processed}
\end{align}%
Taking into account the local expression of $R=R^i{}_{ij}\dot x^j$, we see that $\varepsilon=\varepsilon(\dot x^i, L,L_{,i},L_{\cdot i},\ldots,L_{\cdot i \cdot j \cdot k \cdot l})$ depends on $L$ and its partial derivatives up to order four.

For each coordinate neighborhood on $D^+$, we find the local Vainberg-Tonti Lagrangian as:
\begin{align}\label{VT_Lagrangian_rough}
\mathcal{L}= L \int_0^1 \varepsilon (\dot x^i, tL,tL_{,i},tL_{\cdot i},\ldots,tL_{\cdot i\cdot j \cdot k \cdot l})dt\,.
\end{align}
To evaluate the integral in the Lagrangian we note that, with respect to the fiber homotheties $L\mapsto \tilde{L}:=tL,$ the metric tensor components $g^L_{ij}$ and the inverse metric transform as
\begin{align}
	\tilde g^{\tilde L}_{ij} = t g^L_{ij},\quad \tilde g^{\tilde L ij} = t^{-1} g^{L ij}\,.
\end{align}
The geodesic spray \eqref{eq:geodspray} behaves thus as
\begin{align}
	\tilde G^i(\tilde L,\tilde L_{,i},\tilde L_{, i \cdot j},\tilde L_{\cdot i \cdot j}) = G^i(L,L_{,i},L_{, i \cdot j},L_{\cdot i \cdot j})\,,
\end{align}
which implies the same behavior for the curvature and most importantly for the Finsler Ricci scalar
\begin{align}
	\tilde R^i{}_{jk} = R^i{}_{jk},\quad \tilde R = R\,.
\end{align}
The last missing ingredient in the Vainberg-Tonti Lagrangian is the volume form factor which, by the fact that we are considering a four dimensional manifold $M$, transforms as
\begin{align}
	|\det \tilde g^{\tilde L}| = t^4 |\det g^L|\,.
\end{align}
Employing the scaling behaviors just discussed we find the desired Lagrangian density
\begin{align}
\mathcal{L}= L^{\alpha+1} R |\det g^L| \int_0^1 t^{\alpha+4} dt = \frac{1}{\alpha+5}L^{\alpha+1} R |\det g^L|\,.
\end{align}
In order to correctly define a Lagrangian on $PTM^+$, we must construct a $4$-form of the type $\rho = f dV^+_0 $, with a zero homogeneous $f$, as discussed in \eqref{eq:actioninteg}. To achieve this the above expression for $\mathcal{L}$ must be $(-4)$-homogeneous and so $\alpha$ must be chosen to be $-4$, since $R$ is $2$-homogeneous and $\det g^L$ is $0$-homogeneous.

Thus we conclude that the Lagrange density which yields the variationally completed field equations to Rutz's equation is
\begin{align}\label{eq:VTLRutz}
	\mathcal{L} = L^{-3} R |\det g^L| = R_0 L^{-2}|\det g^L|\,.
\end{align}
This Lagrange density coincides with the ones suggested in \cite{Pfeifer} and \cite{Chen-Shen} (for positive definite Finsler spaces), here derived by the means of variational completion.

Following the canonical variational completion algorithm we found that if Rutz's equation \eqref{eq:rutz} is variational, then the Lagrangian from which it shall be obtained by variation is given by \eqref{eq:VTLRutz}. What we will find in the next section is that the Euler-Lagrange equation of \eqref{eq:VTLRutz} does not coincide with Rutz's equation, so Rutz's equation can not be variational but must be variationally completed by the terms we will find next.

\section{Finsler gravity action and its first variation}\label{sec:vari}
The last step in the variational completion algorithm is to check whether the seed equation \eqref{general_PDE_system} can be obtained by variational calculus from the action defined by its Vainberg-Tonti Lagrangian \eqref{VT_Lagrangian_rough}. If so, the seed equation itself is variational, if not we find the closest variational completion of the seed equation.

The classical variational principle, \cite{Krupka-book}, requires the existence of a fibered manifold $\left( Y,\pi ,X\right) ,$ $\dim X=n,$ $\dim Y=m+n.$ The manifold $Y$ is called the configuration manifold and $X,$ the base (typically - but not necessarily - spacetime) manifold. Sections $\gamma \in \Gamma (Y)$ will be interpreted as \textit{fields}. Deformations of a field (section) are given by 1-parameter groups of fibered automorphisms, generated by projectable vector fields on $Y$.

In this setting, a \textit{Lagrangian} of order $r$ is regarded as a horizontal differential form on the jet bundle $J^{r}Y.$ Denoting by $(x^{A},y^{\sigma },y_{~A}^{\sigma },\ldots,y_{~A_{1} \cdots A_{n}}^{\sigma })$ the fibered coordinates on $J^{r}Y,$ a Lagrangian is locally expressed as $\Lambda =\mathcal{L}d^{n}x^{A},$ where the Lagrangian density $\mathcal{L}=\mathcal{L}(x^{A},y^{\sigma },y_{~A}^{\sigma},\ldots,y_{~A_{1} \cdots A_{n}}^{\sigma })$ is a real-valued function on some open subset of $J^{r}Y.$

\subsection{The Finsler gravity action on $PTM^+$}
Consider an arbitrary connected compact subset $D^+ \subset PTM^+$. Without loss of generality, we are looking for functions L that are smooth and positive on $(\pi^+)^{-1}(D^+)$. Moreover consider the Vainberg-Tonti Lagrangian \eqref{eq:VTLRutz} we constructed in the previous section. Then, the action associated to the Lagrangian $\mathcal{L}$ and to the compact domain $D^+$ is the mapping $S_{D^+}:\Gamma (Y)\rightarrow \mathbb{R},$ $\gamma \mapsto S_{D^+}(\gamma )$ (recall the definition of $Y$ in \eqref{eq:Y}), given by
\begin{align}
	S_{D^+}(\gamma )=\underset{D^+}{\int } J^{4}\gamma^*\Lambda \,,
\end{align}
where \(J^4\gamma^*\Lambda\) is the pullback of \(\Lambda\) to $PTM^+$, along the jet prolongation $J^{4}\gamma$ of \(\gamma\) to the jet bundle \(J^4Y\).

In local homogeneous coordinates on $PTM^+$, $\gamma: [x,\dot x] \mapsto [x,\dot x, L(x,\dot x)]$ can be expressed as $(x,\dot x) \mapsto (x,\dot x, L(x,\dot x))$ and the action becomes
\begin{align}\label{eq:fgravact}
	S_{D^+}(\gamma ) = \int_{D^+} R_0 dV_0^+ = \int_{D^+} \frac{R}{L^3} |\det g^L|\  \mathbf{i}_{\mathbb{C}}(dx^0\wedge \ldots \wedge dx^3 \wedge d\dot x^0 \wedge \ldots \wedge d\dot x^3)\,.
\end{align}
Equivalently, according to Lemma \ref{lemm:int}, this integral can be formulated on $D\subset (\Sigma \cap \mathcal{A})$.

\subsection{Derivation of the field equations}
Take an arbitrary vertical vector field on $Y$, with support strictly contained in $D^+$ and denote by $\{ \Phi _{t}\} $ its 1-parameter group of fibered automorphisms of $Y$. The deformed sections:
\begin{align}
\bar{\gamma}:=\Phi _{t}(\gamma): [x,\dot x] \rightarrow [x,\dot x,\bar L(x,\dot x)]
\end{align}
automatically correspond to equivariant (i.e. $2$-homogeneous) functions $\bar{L}$ on non-lightlike, admissible domains of $TM$. To perform the variation we define
\begin{equation}\label{rel_L_bar}
\delta L=\dfrac{d\bar{L}}{dt}|_{t=0}:=2v\,.
\end{equation}
This also implies the fact that the functions $v$ have to be $2$-homogeneous in $\dot{x}$, as well as that $v$ and its partial derivatives will vanish on the boundary $\partial D^+$.
Moreover, for small enough $t$, the signature of the corresponding Hessian remains the same, i.e., $\bar{L}$ is a spacetime Finsler function.

The first variation of the action $S_{D^+}(\gamma )$ is:
\begin{align}\label{extremal_def}
\delta S_{D^+}(\gamma ):=\dfrac{d}{dt}|_{t=0}(S_{D^+}(\bar{\gamma}))\,.
\end{align}
Critical points, or \textit{extremals, \cite{Krupka-book},} are defined by the condition that, for \textit{any} admissible, non-lightlike compact domain $D^+\subset PTM^{+}$ and \textit{any} $v$ with support contained in $D^+$, $\delta S_{D^+}(\gamma )=0$.

The expression $\delta S_{D^+}(\gamma)$ will be split into three integrals
\begin{equation}
\delta S_{D+}(\gamma )=(I_{1}+I_{2}+I_{3}) \label{sum_I}\,,
\end{equation}
which one obtains from differentiation in turns of the last expression of \eqref{eq:fgravact}. The integrals are
\begin{align}
	I_1 
	&= \int_{D^+} L^2 \left.\frac{d}{dt}\frac{1}{\bar L^3}\right|_{t=0}\ R\ dV_0^+\,,\\
	I_2 
	&= \int_{D^+} \frac{1}{L} \left.\frac{d\bar R}{dt}\right|_{t=0}\ dV_0^+\,,\\
	I_3 
	&=\int_{D^+} \frac{R}{L} \frac{1}{|\det g^L|}\left.\frac{d |\det \bar g^{\bar L}|}{dt}\right|_{t=0}\ dV_0^+\,.
\end{align}

The first integral is easily evaluated to be
\begin{align}
	I_1 = - \int_{D^+} 6 \frac{R}{L} \frac{v}{L}\ dV_0^+\,.
\end{align}
The other two integrals involve rather lengthy calculation which we will display in detail in Appendix \ref{app:I2I3} and yield
\begin{align}
	I_2 &= - \int_{D^+} 2g^{Lij}(P_{i|j} - P_iP_j + (\nabla P_i)_{\cdot j})\frac{v}{L}\ dV_0^+\,, \label{eq:int2}\\
	I_3 &= \int_{D^+} g^{Lij}R_{\cdot i\cdot j}\frac{v}{L}\ dV_0^+\,\,.\label{eq:int3}
\end{align}
Thus  finally we obtain that the extremal points of the Finsler gravity vacuum action \eqref{eq:fgravact}, formulated on the bundle $Y$ with positive projective tangent bundle as base manifold, must satisfy
\begin{align}\label{eq:fgraveq2}
	\delta S_{D+}(\gamma ) = \int_{D^+} \bigg[\frac{1}{2}g^{Lij}R_{\cdot i\cdot j} - 3 \frac{R}{L} - g^{Lij}(P_{i|j} - P_iP_j + (\nabla P_i)_{\cdot j})\bigg]\frac{2v}{L}dV_0^+ = 0\,,
\end{align}
which leads us to formulate
\begin{theorem}
	Critical points of the Finsler gravity action \eqref{eq:fgravact} formulated on subsets of the positive projective tangent bundle $PTM^+$ are given by the equation:
	\begin{align}\label{eq:fgrav}
		\frac{1}{2}g^{Lij}R_{\cdot i\cdot j} - \frac{3}{L} R - g^{Lij}(P_{i|j} - P_iP_j + (\nabla P_i)_{\cdot j}) = 0\,.
	\end{align}
\end{theorem}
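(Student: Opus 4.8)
The plan is to prove the theorem by carrying out the first variation of the action \eqref{eq:fgravact} directly, in local homogeneous coordinates on $PTM^+$, reducing the resulting integral to the single boundary-term-free form \eqref{eq:fgraveq2} by means of the integration-by-parts identities \eqref{divergence_horizontal}--\eqref{divergence_vertical} and Lemma~\ref{lem:intbdry}, and then invoking the fundamental lemma of the calculus of variations. The only genuinely heavy input is the variation of the Finsler Ricci scalar, which I would isolate and relegate to Appendix~\ref{app:I2I3}.

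First I would fix the set-up. A compactly supported vertical deformation $\Phi_t$ of the section $\gamma$ corresponds, by the identification $L \leftrightarrow \gamma$, to a one-parameter family $\bar L = L + 2tv + O(t^2)$ of $2$-homogeneous functions, with $v$ two-homogeneous in $\dot x$ and $v$ together with its derivatives vanishing on $\partial D^+$; thus $\delta L = 2v$ and $\delta g^L_{ij} = \tfrac12(\delta L)_{\cdot i\cdot j} = v_{\cdot i\cdot j}$. Since in the last expression of \eqref{eq:fgravact} the Euclidean form $\mathbf{i}_{\mathbb{C}}(dx^0\wedge\ldots\wedge d\dot x^3)$ carries no dependence on $L$, differentiating under the integral and applying the product rule to the scalar density $R\,L^{-3}|\det g^L|$ produces precisely the three terms $I_1,I_2,I_3$ of \eqref{sum_I}. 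The term $I_1$ is immediate, using $\tfrac{d}{dt}\bar L^{-3}|_{t=0} = -6vL^{-4}$, so $I_1 = -\int_{D^+}6\tfrac{R}{L}\tfrac{v}{L}\,dV_0^+$. The term $I_3$ uses the Jacobi formula $\delta|\det g^L| = |\det g^L|\,g^{Lij}\delta g^L_{ij} = |\det g^L|\,g^{Lij}v_{\cdot i\cdot j}$, after which two vertical integrations by parts against $dV_0^+$ — carried out with \eqref{divergence_vertical}, the boundary contributions dropping — move both vertical derivatives off $v$; using $\dot x^i v_{\cdot i} = 2v$, $g^{Lij}\dot x_j = \dot x^i$ and the Cartan-tensor contraction $\dot x^i C_{ijk}=0$, these collapse to $I_3 = \int_{D^+}g^{Lij}R_{\cdot i\cdot j}\tfrac{v}{L}\,dV_0^+$ as in \eqref{eq:int3}; Lemma~\ref{lem:intbdry}, in particular \eqref{eq:gxx}, is exactly the packaging of this manipulation.

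The hard part — and where I expect the real obstacle to lie — is $I_2 = \int_{D^+}\tfrac1L(\delta R)\,dV_0^+$, i.e.\ the linearisation of the Finsler Ricci scalar. One would first linearise the geodesic spray: from \eqref{eq:geodspray}, $\delta(2G^i) = \tfrac12\delta g^{Lij}(\dot x^k L_{,k\cdot j}-L_{,j}) + \tfrac12 g^{Lij}(\dot x^k(2v)_{,k\cdot j}-(2v)_{,j})$ with $\delta g^{Lij} = -g^{Lia}g^{Ljb}v_{\cdot a\cdot b}$, so $\delta G^i$ is a second-order operator in $v$ with coefficients built from $g^L$ and the connection. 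Propagating this through $R^i{}_{jk} = \delta_kG^i{}_{\cdot j}-\delta_jG^i{}_{\cdot k}$ and the trace $R = R^i{}_{ik}\dot x^k$ (cf.\ \eqref{eq:nlcurv}--\eqref{eq:FinslerRicciS}) yields $\delta R$ as a fourth-order linear expression in $v$. The delicate step is then to integrate $\int_{D^+}\tfrac1L(\delta R)\,dV_0^+$ by parts — systematically using the horizontal and vertical divergence formulas \eqref{divergence_horizontal}--\eqref{divergence_vertical} and the structural identities $\delta_iL = L_{|i}=0$, $\dot x^i_{|j}=0$, $g^L_{ij|k}=0$, $\nabla C^i{}_{jk}=P^i{}_{jk}$, $\nabla=\dot x^i\mathrm D_{\delta_i}$, discarding all boundary terms — and to verify that every stray term either cancels against a term coming from the second integration by parts in $I_3$ or reassembles into the Landsberg combination, so that $I_2 = -\int_{D^+}2g^{Lij}(P_{i|j}-P_iP_j+(\nabla P_i)_{\cdot j})\tfrac{v}{L}\,dV_0^+$ as in \eqref{eq:int2}. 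This bookkeeping is exactly the content of Appendix~\ref{app:I2I3}, and it is where care is needed to keep track of the interplay between the horizontal and vertical derivatives.

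Finally I would assemble the pieces: $I_1+I_2+I_3$ is precisely \eqref{eq:fgraveq2}, i.e.\ $\delta S_{D^+}(\gamma) = \int_{D^+}\big[\tfrac12 g^{Lij}R_{\cdot i\cdot j} - \tfrac{3}{L}R - g^{Lij}(P_{i|j}-P_iP_j+(\nabla P_i)_{\cdot j})\big]\tfrac{2v}{L}\,dV_0^+$. One notes that the bracket is $0$-homogeneous in $\dot x$, as is $v/L$, so the integrand descends to a genuine $7$-form on $PTM^+$, and that every compactly supported $2$-homogeneous $v$ on $\overset{\circ}{TM}$ — equivalently, every compactly supported variation of a section of $Y$ — is admissible; since bump functions of this type localise around any point of any admissible, non-lightlike $D^+$, and $\tfrac{2}{L}\,dV_0^+$ is a non-degenerate volume form there, the fundamental lemma of the calculus of variations forces the bracket to vanish identically. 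This is equation \eqref{eq:fgrav}, completing the proof.
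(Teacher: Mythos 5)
Your proposal is correct and follows essentially the same route as the paper: the same splitting of $\delta S_{D^+}$ into $I_1+I_2+I_3$, the same linearisation of the geodesic spray and Ricci scalar with integration by parts via \eqref{divergence_horizontal}--\eqref{divergence_vertical} (the paper likewise defers these computations to Appendix~\ref{app:I2I3}), and the same final appeal to the arbitrariness of $v$ and $D^+$. The only slight inaccuracy is incidental: the paper obtains $I_3$ directly from the Jacobi formula and \eqref{divergence_vertical} rather than through Lemma~\ref{lem:intbdry}, whose identity \eqref{eq:gxx} is reserved for the pseudo-Riemannian analysis, but this does not affect your argument.
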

Once a solution $L$ of this equation is found, it holds on the set of admissible non-lightlike vectors of $L$, which we denoted by $\mathcal{A}_0^+$.
This equation is identical to the one found in \cite{Pfeifer} on the unit tangent bundle and in \cite{Chen-Shen} for positive definite Finsler spaces. The important new ingredients here are that the integration domains on $PTM^+$ are compact and do not depend on the Finsler-Lagrange function as well as that the Lagrange density used in the action \eqref{eq:fgravact} was obtained by variational completion in Section \ref{sec:RutzVT}.

\section{The pseudo-Riemannian case}\label{sec:metric}
Before concluding, we exemplify our findings for pseudo-Riemannian Finsler geometries $L(x,\dot x)= g_{ij}(x)\dot x^i \dot x^j$, which are defined by a metric $g = g_{ij}(x)dx^i\otimes dx^j$ with Lorentzian signature.

For such Finsler-Lagrange functions the Landsberg tensor vanishes $P_i = 0$, the components of the Finsler metric become identical to the components of the Lorentzian metric $g^L_{ij}(x,\dot x) = g_{ij}(x)$, the Finsler Ricci scalar is the contracted Ricci tensor $R = - r_{ij}(x)\dot x^i \dot x^j$ and thus, multiplying by $L$, the Finsler gravity equation becomes
\begin{align}
	0 = 3 R - \frac{L}{2}g^{Lij}R_{\cdot i\cdot j} = - 3 r_{ij}(x)\dot x^i \dot x^j + (g_{ij}(x)\dot x^i \dot x^j)r(x)\,,
\end{align}
where $r(x) = r_{ij}(x)g^{ij}(x)$ is the Riemannian Ricci scalar. Taking a second derivative with respect to $\dot x$ the equation leads to
\begin{align}\label{eq:prvac}
	3 r_{ij} - g_{ij}r = 0\,,
\end{align}
which implies $r=0$. Hence we find equivalence to the Einstein vacuum equations $r_{ij}=0$.

After having found that the field equations have the correct pseudo-Riemannian limit we now demonstrate two further properties, which ensure consistency of the Finsler gravity action in the pseudo-Riemannian limit with the Einstein--Hilbert action of general relativity. First we will provide a clear relation between them, second, we show consistency with the Bianchi identities. Hence, when matter is coupled, no contradiction with the usual Bianchi identities will appear, as one might worry when looking at the factor of $3$ appearing where one might have expected a $2$ in \eqref{eq:prvac}.

\begin{proposition}\label{prop:dyneq}
	For the class of quadratic Finsler spacetime Lagrange functions, $L= g_{ij}(x)\dot x^i \dot x^j$ the Finsler gravity Lagrangian $\Lambda = R_0(x,\dot x) dV^+_0$ is identical to the Einstein--Hilbert Lagrangian of general relativity up to a boundary term and a volume factor.
\end{proposition}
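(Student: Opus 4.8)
The plan is to compute both Lagrangian densities explicitly in the pseudo-Riemannian case $L = g_{ij}(x)\dot x^i\dot x^j$ and compare them. First I would use the facts already established in the paper: for such an $L$ one has $g^L_{ij}(x,\dot x) = g_{ij}(x)$, hence $|\det g^L| = |\det g|$ is a function of $x$ alone, and the Finsler Ricci scalar is $R(x,\dot x) = -r_{ij}(x)\dot x^i\dot x^j$, so that $R_0 = R/L = -r_{ij}\dot x^i\dot x^j/(g_{kl}\dot x^k\dot x^l)$. Thus the Finsler Lagrangian $\Lambda = R_0\, dV_0^+$ has density $-\,\dfrac{r_{ij}\dot x^i\dot x^j}{g_{kl}\dot x^k\dot x^l}\cdot\dfrac{|\det g|}{L^2}$ in local homogeneous coordinates (absorbing the interior-product/$\mathbf i_{\mathbb C}$ structure into $dV_0^+$). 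The target is to show this equals the pullback of the Einstein--Hilbert density $-\,r(x)\,|\det g|\, d^4x$ (up to sign conventions $R = -r_{ij}\dot x^i\dot x^j$) times the fiber volume form $dV^{\mathrm{fib}}$ on $PTM^+$, modulo an exact form.

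The key step is the fiber integration / pointwise identity relating $r_{ij}\dot x^i\dot x^j/(g_{kl}\dot x^k\dot x^l)$ to the scalar $r = r_{ij}g^{ij}$. Here is exactly where Lemma~\ref{lem:intbdry}, in the form of equation~\eqref{eq:totdiv}, does the work: taking $\phi_{ij} = r_{ij}(x)$ there gives
\begin{align}\label{eq:propid}
	\big(g^{Lij} - 4 L^{-1}\dot x^i\dot x^j\big) r_{ij}\, dV_0^+ = d\big(\mathbf i_X dV_0^+\big)\,,
\end{align}
so that $4 L^{-1} r_{ij}\dot x^i\dot x^j\, dV_0^+$ and $g^{ij}r_{ij}\, dV_0^+ = r(x)\, dV_0^+$ differ by an exact form. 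Since $R_0 = -L^{-1}r_{ij}\dot x^i\dot x^j$, this reads $4 R_0\, dV_0^+ = -\,r(x)\, dV_0^+ + d(\mathbf i_X dV_0^+)$, i.e.\ $\Lambda = R_0\, dV_0^+ = -\tfrac14 r(x)\, dV_0^+ + \tfrac14 d(\mathbf i_X dV_0^+)$. Because $r(x)$ is a function on $M$ (constant along fibers of $\pi^+$), the form $r(x)\, dV_0^+$ factorizes, over a chart, as $r(x)\,|\det g|\, d^4x$ wedged with the fiber volume form coming from the Hilbert-form construction of $dV_0^+$ — this is the "volume factor" in the statement — and $r(x)\,|\det g|\, d^4x$ is precisely the Einstein--Hilbert Lagrangian density (up to the overall sign fixed by the paper's convention). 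The exact term $\tfrac14 d(\mathbf i_X dV_0^+)$ is the advertised boundary term.

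The main obstacle is bookkeeping the homogeneity/interior-product structure cleanly: $dV_0^+$ lives on the $7$-dimensional $PTM^+$ and is built from the Hilbert form via $dV_0^+ = \tfrac{1}{3!}\omega\wedge d\omega\wedge d\omega\wedge d\omega$, and one must check that for a quadratic $L$ the $dx^0\wedge\cdots\wedge dx^3$ part of $dV_0^+$ carries exactly the factor $|\det g|$ so that contracting with the fiber directions reproduces the genuine spacetime volume $\sqrt{|\det g|}\,d^4x$ after the fiber integration — and that the residual $3$-form in the $u^\alpha$ (fiber) coordinates integrates to a finite, $x$-independent constant over the projective fiber. I would verify this via the explicit local expansion $dV_0^+ = \tfrac{|\det g^L|}{L^2}\mathbf i_{\mathbb C}(dx^0\wedge\cdots\wedge d\dot x^3)$ given in the text, specialized to $g^L_{ij} = g_{ij}(x)$, and note that the statement only claims equality "up to a boundary term and a volume factor," so I do not need to pin down that constant — only to exhibit the factorization. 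Modulo this routine (if slightly delicate) matching of volume forms, identity~\eqref{eq:propid} is the whole proof. \rule{0.5em}{0.5em}
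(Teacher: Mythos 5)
Your proposal is correct and follows essentially the same route as the paper: the paper applies Lemma \ref{lem:intbdry} (equation \eqref{eq:gxx}) directly with $f=-r_{ij}(x)\dot x^i\dot x^j L^{-1}$, which is exactly the special case \eqref{eq:totdiv} with $\phi_{ij}=r_{ij}$ that you invoke, yielding $\Lambda = -\tfrac14 r(x)\,dV_0^+$ plus an exact term, followed by the same fiber-integration remark. The only superficial difference is the coefficient of the exact term ($+\tfrac14$ versus the paper's $-\tfrac18$), which is accounted for by the fact that your $X$ (built from $f=\tfrac12 L^{-1}r_{ij}\dot x^i\dot x^j$) is $-\tfrac12$ times the paper's, so the two expressions agree.
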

\begin{proof}
	For quadratic Finsler spacetime Lagrange functions, $L= g_{ij}(x)\dot x^i \dot x^j$ the Finsler gravity Lagrangian is
	\begin{align}
		\Lambda = R_0(x,\dot x) dV^+_0 = - r_{ij}(x)\frac{\dot x^i \dot x^j}{L} dV^+_0\,.
	\end{align}
	Applying Lemma \ref{lem:intbdry}
	, with $f = - r_{ij}(x)\dot x^i \dot x^j L^{-1}$ and $X=L g^{ij} f_{\cdot j}\dot{\partial}_i$ yields
	\begin{align}\label{eq:FGtoEH}
	\Lambda = R_0(x,\dot x) dV^+_0 = - \frac{1}{4} r(x) dV^+_0 - \frac{1}{8} d(\mathbf{i}_X dV^+_0)\,.
	\end{align}
	When integrating over $D^+$ and performing the fiber integration in the first term on the r.h.s., it becomes the Einstein-Hilbert action multiplied by a finite volume term, while the second term is a pure boundary term, hence irrelevant for the derivation of the Euler-Lagrange equations. This derivation displays the relation between the Finsler gravity action and the Einstein-Hilbert action of general relativity.
\end{proof}

\begin{proposition}\label{prop:BI}
	Let $g$ be a Lorentzian metric and $L= g_{ij}(x)\dot x^i \dot x^j$. On a pseudo-Riemannian Finsler spacetime $(M,L)$, diffeomorphism invariance of the Finsler gravity action yields the contracted Bianchi identities up to a boundary term.
\end{proposition}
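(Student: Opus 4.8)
The plan is to read off the contracted Bianchi identities as the Noether identity attached to diffeomorphism invariance, in the spirit of Noether's second theorem, and to keep careful track of the boundary contributions that arise because, by Proposition \ref{prop:dyneq}, the Finsler gravity Lagrangian differs from an honest Einstein--Hilbert density by an exact form. Concretely, I would compute the variation $\delta_\xi S_{D^+}(\gamma)$ induced by (the lift of) a spacetime vector field $\xi$ in two different ways and equate the results.

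The setup is as follows. A vector field $\xi=\xi^i(x)\partial_i$ on $M$ has a complete lift to $\overset{\circ}{TM}$ which commutes with the Liouville field $\mathbb{C}$, hence descends to a vector field $\tilde\xi$ on $PTM^+$ whose flow is the lift of the flow of $\xi$. For $L=g_{ij}(x)\dot x^i\dot x^j$ this deforms $L$ by $\delta_\xi L=(\mathcal{L}_\xi g)_{ij}\dot x^i\dot x^j=2\nabla_{(i}\xi_{j)}\dot x^i\dot x^j$, so in the notation of Section \ref{sec:vari} the generating function is the $2$-homogeneous $v=v_\xi:=\nabla_{(i}\xi_{j)}\dot x^i\dot x^j$. \emph{First evaluation (covariance).} Since $\Lambda=R_0\,dV_0^+$ is built covariantly from $L$, it transforms as $\Lambda[\Phi_t^{\ast}L]=\tilde\Phi_t^{\ast}\Lambda[L]$ under the lifted flow, so $S_{D^+}(\Phi_t\cdot\gamma)=\int_{\tilde\Phi_t(D^+)}\Lambda$ and therefore $\delta_\xi S_{D^+}(\gamma)=\int_{D^+}\mathcal{L}_{\tilde\xi}\Lambda=\int_{D^+}d(\mathbf{i}_{\tilde\xi}\Lambda)=\oint_{\partial D^+}\mathbf{i}_{\tilde\xi}\Lambda$, using that $\Lambda$ is a top-degree form on the $7$-manifold $PTM^+$ (so $d\Lambda=0$), Cartan's formula and Stokes' theorem; thus $\delta_\xi S$ is a pure boundary term. \emph{Second evaluation (first variation).} Repeating the computation of Section \ref{sec:vari} and Appendix \ref{app:I2I3} but retaining the total-divergence contributions discarded there — legitimate because $v_\xi$ need not vanish on $\partial D^+$ — gives $\delta_\xi S_{D^+}(\gamma)=\int_{D^+}\mathcal{E}[L]\,\tfrac{2v_\xi}{L}\,dV_0^+ +\oint_{\partial D^+}(\cdots)$, where $\mathcal{E}[L]$ is the left-hand side of \eqref{eq:fgrav}; specializing to $L=g_{ij}\dot x^i\dot x^j$, where $P_i=0$, $g^L_{ij}=g_{ij}$ and $R=-r_{ij}\dot x^i\dot x^j$, one finds $L\,\mathcal{E}[L]=(3r_{ij}-r\,g_{ij})\dot x^i\dot x^j$, in agreement with \eqref{eq:fgraveq2} and the pseudo-Riemannian reduction.

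Equating the two evaluations produces the off-shell identity $\int_{D^+}\tfrac{2}{L^2}(3r_{ij}-r\,g_{ij})\dot x^i\dot x^j\,\nabla_{(k}\xi_{l)}\dot x^k\dot x^l\,dV_0^+=(\text{boundary term on }\partial D^+)$. To turn this into a tensorial statement on $M$ I would take $D^+$ fibrewise complete over a compact base patch $\tilde D\subset M$ — e.g. the future timelike directions with $1\le g_{ij}\dot x^i\dot x^j\le 2$ — and integrate over the fibre. The only $\dot x$-dependence in the bulk factor is the $0$-homogeneous, totally symmetric tensor $\dot x^i\dot x^j\dot x^k\dot x^l/L^2$, whose fibre integral against $dV_0^+$ must, by the Lorentz symmetry of the fibre domain, be of the form $c\,g^{(ij}g^{kl)}\sqrt{-g}$ for a constant $c$. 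The index algebra then collapses, via $g^{ij}(3r_{ij}-r g_{ij})=-r$ and $g^{ik}g^{jl}(3r_{ij}-r g_{ij})=3r^{kl}-rg^{kl}$, to $(3r_{ij}-r g_{ij})g^{(ij}g^{kl)}=2G^{kl}$, so — and this is where the apparent ``factor $3$ versus $2$'' tension of \eqref{eq:prvac} dissolves — the identity becomes $\int_{\tilde D}G^{kl}\nabla_k\xi_l\,\sqrt{-g}\,d^4x=(\text{boundary})$; one further integration by parts and the arbitrariness of $\xi$ yield $\nabla_k G^{kl}=0$ modulo a boundary term, which are the contracted Bianchi identities. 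Equivalently, one may instead start from Proposition \ref{prop:dyneq}, write $S_{D^+}(\gamma)=-\tfrac14\int_{D^+}r\,dV_0^+ + (\text{boundary})$, perform the fibre integration to obtain the Einstein--Hilbert action times a finite volume factor plus a boundary term, and then invoke the classical Noether second-theorem argument for general relativity.

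The step I expect to be the main obstacle is the boundary-term bookkeeping: one must check that the two integrations by parts — the one hidden in the first-variation formula and the one used to strip $\xi$ — together with the fibre integration produce only terms supported on $\partial D^+$ (and on the fibre boundary), and in particular that the bulk residue is \emph{exactly} $\int_{\tilde D}G^{kl}\nabla_k\xi_l\sqrt{-g}\,d^4x$ with an $x$-independent proportionality constant. This is precisely why one restricts to metric-uniform fibre domains; for a general compact $D^+$ the fibre-volume would be a nonconstant weight $w(x)$ and one would only recover $\nabla_k(w\,G^{kl})=0$ up to boundary. Controlling these boundary contributions — together with the elementary symmetry argument fixing the fibre integral — is the entire content of the qualifier ``up to a boundary term'' in the statement, and all remaining computations are routine.
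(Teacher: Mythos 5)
Your overall skeleton (complete lift of $\xi$, $2v_\xi=2\xi_{(i|j)}\dot x^i\dot x^j$, Noether-type comparison of the diffeomorphism variation with the Euler--Lagrange expression contracted with $v_\xi$, and the reduction $L\,\mathcal{E}[L]=(3r_{ij}-r\,g_{ij})\dot x^i\dot x^j$) matches the paper's proof. The genuine gap is in the step you yourself flag as the crux: converting the $\dot x$-quartic bulk term $L^{-2}(3r_{ij}-rg_{ij})\dot x^i\dot x^j\,\xi_{(k|l)}\dot x^k\dot x^l$ into $G^{kl}\xi_{k|l}$ by fibre integration plus a ``Lorentz symmetry of the fibre domain'' argument. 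In Lorentzian signature this argument does not go through: the only subset of future timelike directions in $PT_xM^+$ invariant under the orthochronous isometry group of $g_x$ is the whole open cone (the action is transitive on future timelike directions), and over that domain the integral diverges, since $dV_0^+$ carries a density $|\det g|/L^2$ and your integrand contributes a further $L^{-2}$, both blowing up at the light cone; any compact fibre domain strictly inside the cone, by contrast, is not invariant, so nothing forces its moment integral to be proportional to $g^{(ij}g^{kl)}$, let alone with an $x$-independent constant $c$. (Your proposed domain ``future timelike directions with $1\le g_{ij}\dot x^i\dot x^j\le 2$'' is also not well defined on $PTM^+$, since $L$ is $2$-homogeneous, and its preimage in $TM$ is a non-compact hyperboloid shell.) So the identity $(3r_{ij}-rg_{ij})g^{(ij}g^{kl)}=2G^{kl}$, which you compute correctly, is never legitimately reached by this route.

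The paper closes exactly this gap without any fibre integration: Lemma \ref{lem:intbdry} provides \emph{pointwise} identities of $7$-forms, in particular \eqref{eq:totdiv}, \eqref{eq:L3e2} and \eqref{eq:gxx}, which trade the contractions $4L^{-1}\dot x^i\dot x^j\phi_{ij}(x)$ and $L^{-2}\dot x^i\dot x^j\dot x^k\dot x^l\phi_{ijkl}(x)$ for metric contractions up to exact forms $d(\mathbf{i}_XdV_0^+)$. Applying \eqref{eq:totdiv} to the $r$-term and \eqref{eq:L3e2} followed by \eqref{eq:gxx} to the $\dot x^m\dot x^n r_{mn}$-term yields
\begin{align}
E(L)=\Bigl(r^{ij}-\tfrac{1}{2}g^{ij}r\Bigr)\xi_{i|j}\,dV_0^+ + d(\Xi+\zeta)\,,
\end{align}
after which the divergence formula \eqref{divergence_horizontal} (with $P_i=0$) extracts $\nabla_jG^{ij}$ up to a further exact term, with no symmetry argument, no constants to fix, and no divergent integrals. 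If you want to salvage your approach, replace the fibre-averaging step by these identities (or prove an analogous pointwise statement); your alternative route via Proposition \ref{prop:dyneq} suffers from the same problem, because the ``finite volume factor'' there is not an $x$-independent constant for a general compact $D^+$.
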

\begin{proof}
	Infinitesimal diffeomorphisms on $M$ are generated by vector fields $\xi = \xi^i(x)\partial_i$. Their action on the Finsler-Lagrange function $L$ is generated by a trivial lift of $\xi^C = \xi^i(x)\partial_i + \dot x^j \xi^i_{,j}\dot{\partial}_i$ to the configuration manifold $Y$. That is the quantity $v$ in \eqref{eq:fgraveq2} is given by $2v = \xi^C(L) = 2 \xi_{i|j}\dot x^i \dot x^j$, and the Euler Lagrange form $E(L)$ is
	\begin{align}
	E(L) = - \left(r(x) - 3 \frac{\dot x^m \dot x^n r_{mn}(x)}{L}\right)\frac{4 \xi_{i|j}\dot x^i \dot x^j}{L} dV_0^+\,.
	\end{align}
	Here the appearing Chern-Rund covariant derivative is identical to the Levi-Civita covariant derivative of the metric defining the Finsler-Lagrange function. We now prove that $E(L)$ is a total derivative expression up to the covariant divergence of the Einstein tensor, i.e. up to the contracted Bianchi identities.

	Applying equation \eqref{eq:totdiv} to the first term of $E(L)$ yields
	\begin{align}\label{eq:EL1}
		-4 r L^{-1} \xi_{i|j}\dot x^i \dot x^j\ dV_0^+ = - r \xi^i_{|i} \ dV_0^+ + d\Xi\,,
	\end{align}
	where $\Xi$ stands for total derivative terms appearing in Lemma \ref{lem:intbdry}. Next apply first equation \eqref{eq:L3e2} and then \eqref{eq:gxx} to the second term to obtain
	\begin{align}\label{eq:EL2}
		12 L^{-2} \dot x^m \dot x^n r_{mn} \xi_{i|j}\dot x^i \dot x^j\ dV_0^+
		&= \left( r \xi_{i|j} + 2 r_{im}(\xi^m_{|j} + \xi^l_{|n}g^{nm}g_{lj}) + r_{ij}\xi^m_{|m} \right)L^{-1}\dot x^i \dot x^j \ dV_0^+  + d\Upsilon\nonumber\\
		&= \left(\frac{1}{2}r\xi^m_{|m} + r^{ij}\xi_{i|j}\right)\ dV_0^+ +  d\zeta\,,
	\end{align}
	where $d\Upsilon$ and $d\zeta$ stand for total derivative terms appearing in Lemma \ref{lem:intbdry}. Adding \eqref{eq:EL1} and \eqref{eq:EL2} finally combines to
	\begin{align}
		E(L) = \left(-\frac{1}{2}g^{ij}r + r^{ij} \right) \xi_{i|j} dV_0^+ + d(\Xi + \zeta)\,,
	\end{align}
	which finally proves that $E(L)$ is a total derivative up to the contracted Bianchi identities.
\end{proof}

In an upcoming article we will investigate matter coupling and diffeomorphism invariance of the general Finsler gravity action in detail.

\section{Conclusion}\label{sec:conc}
To enrich and focus the discussion about a proper viable Finsler generalization of the Einstein equations we presented strong arguments which identify the equation \eqref{eq:fgrav} as the simplest, mathematically consistent, action based gravitational field equation, determining the Finsler function of a Finsler spacetime.

Starting from the physical argument that gravity causes the tidal forces in geodesic deviation, Rutz identified the Cartan non-linear curvature tensor as the relevant mathematical object which encodes the gravitational interaction on the basis of Finsler geometry. Here we proved, by using canonical variational completion, that the field equation suggested by her in \cite{Rutz} can not be obtained by the means of calculus of variations, and that the field equations suggested in \cite{Pfeifer} are the variational completion of the former ones. In order to rigorously apply this formalism, we formulate the Finsler gravity action on a configuration space sitting over the positive projective tangent bundle of the spacetime manifold. The positive projective tangent bundle is compact and does not depend on the Finsler function to be determined. Further, we showed that for pseudo-Riemannian metrics, the gravity action conjectured in \cite{Pfeifer} and found by variational completion here becomes the Einstein--Hilbert action up to a boundary term. This result may be understood as an additional argument supporting the action we constructed. Accordingly, consistency of equation \eqref{eq:fgrav} with Bianchi identities is established as a result of the invariance of the action to diffeomorphisms of the spacetime manifold.

Our results open the possibility for different further directions of research. One natural question is what generalizes the contracted Bianchi identities from the pseudo-Riemannian setting in general Finsler spacetime. This question is closely connected to the conservation equation for matter actions and the construction of a consistent matter coupling. The latter may be expressed either in terms of certain tensor or spinor fields, or as a kinetic fluid. Finally, to understand the predictivity of a Finsler gravity theory it is necessary to understand the initial value problem of the Finsler gravity equation. The correct initial value formulation for field equation on the projective tangent bundle must be constructed and the Finsler gravity equation shall be cast into an initial value form.

Regarding the matter coupling, one possibility is reconsider the tangent bundle matter actions suggested in \cite{Pfeifer} and reformulate them on the projective tangent bundle, in the same fashion as we did with the gravitational action. A new approach to matter couplings, which shall be investigated in the future, is offered by Lemma \ref{lem:intbdry}. It allows us to rewrite contractions with a metric into an integral over contractions with velocities and hence, to rewrite kinetic terms in the usual matter field actions on spacetime into matter field actions on the projective tangent bundle in a canonical way. The most promising outlook for a coupling between matter and a Finslerian tangent bundle geometry is the direct coupling of a kinetic gas to the geometry of spacetime. A kinetic gas is directly described on the tangent bundle in terms of one-particle distribution functions \cite{Ehlers2011}, which can naturally be described in the Finsler language~\cite{Hohmann:2015ywa}. In its standard formulation the gas back reacts to gravity via averaging, since its energy momentum-tensor, which couples to gravity via the Einstein equations, is obtained by averaging over the velocities of the constituents of the gas.  The reformulation of the dynamics of the kinetic gas on the positive projective tangent bundle allows us to directly couple it to the Finsler gravity and omit the averaging procedure. This offers the perspective of a more precise description of gravitating kinetic gases and its applications to cosmology for possible insight to dark energy as averaging effect.

\begin{acknowledgments}
The authors are grateful to Bin Chen, Miguel Sánchez Caja, Miguel Angel Javaloyes, Demeter Krupka and Volker Perlick for useful discussion about the matter of this article. Moreover the authors like to thank the anonymous referee for the very good and detailed comments and suggestions, which helped to improve the article. MH and CP were supported by the Estonian Ministry for Education and Science through the Institutional Research Support Project IUT02-27 and Personal Research Funding Grants PUT790 and PRG356, as well as the European Regional Development Fund through the Center of Excellence TK133 ``The Dark Side of the Universe''. NV was supported by a local grant of the Transilvania University. This article is based upon work from COST Action CANTATA (CA 15117), supported by COST (European Cooperation in Science and Technology).
\end{acknowledgments}

\appendix

\section{Convexity of $\mathcal{T}_x$}\label{app:Conv}
We claimed that our definition of Finsler spacetimes, Definition \ref{def:FST}, ensures the existence of a convex set $\mathcal{T}_x \subset T_xM$ for all $x\in M$. Here we give a proof of this statement.

First we note that $\partial \mathcal{T\subset N}$. Since $TM$ is a manifold, it behaves locally like $\mathbb{R}^{2n}$. In particular it is locally connected. As the function $L$ is assumed to be everywhere continuous and, on $\mathcal{T}$, we have by definition $L>0$, it follows that on the boundary $\partial\mathcal{T}$ we can only have $L\geq 0$. Assume that there exists a point $(x,\dot x)\in \partial \mathcal{T}$ such that $L(x,\dot x)>0$. It is sufficient to notice that there exists an open connected neighbourhood $B$ of $(x,\dot x)$ contained in $L^{-1}((0,\infty))$. As $(x,\dot x)$ is a boundary point of $\mathcal{T}$, $B$ intersects $\mathcal{T}$ and the complement $TM\setminus \mathcal{T}$. But then, the union $B \cup \mathcal{T}$ is connected, contained in $L^{-1}((0,\infty))$ and strictly contains $\mathcal{T}$, which is in contradiction with the maximal connectedness of $\mathcal{T}$ as a subset of $L^{-1}((0,\infty))$.

Second we find that set $\mathcal{S}:=\mathcal{T}_{x}\cap \{\dot{x}\in T_{x}M|L(x,\dot{x})\geq 1\}$ is convex, which follows from the fact that it is closed, connected and (strongly) locally convex, as we will show now.
\begin{enumerate}
	\item \textit{Proof that $\mathcal{S}$ is closed}:
	As the set $\{\dot{x}\in T_{x}M|L(x,\dot{x})\geq 1\}$ does not intersect $%
	\mathcal{N}=\ker L$ and $\partial \mathcal{T\subset N},$ it follows that we
	can as well write:%
	\begin{equation*}
	\mathcal{S}=\mathcal{\overline{T}}_x\cap \{\dot{x}\in T_{x}M|L(x,\dot{x})\geq 1\}.
	\end{equation*}%
	But, both the closure $\mathcal{\overline{T}}_x$ and the continuous preimage $\{%
	\dot{x}\in T_{x}M|L(x,\dot{x})\geq 1\}=L^{-1}([1,\infty ))$ are closed sets,
	i.e., $\mathcal{S}$ is closed.

	\item \textit{Proof that $\mathcal{S}$ is connected}:
	By the 2-homogeneity of $L$ and the fact that $\mathcal{T}_x$ is conic it is clear that we can write $\mathcal{T}_x$ as $\mathcal{T}_x = (0,\infty) \cdot \mathcal{S}$, i.e.\ every element $Y \in \mathcal{T}_x$ can be written as $y=\lambda z, z\in \mathcal{S}$ for some $\lambda\in (0,\infty)$. A straightforward argument shows that, if $\mathcal{S}$ would not be connected then $\mathcal{T}_x$ would not be connected. However $\mathcal{T}_x$ is connected by assumption and hence $\mathcal{S}$ must be connected.

	\item \textit{Proof that $\mathcal{S}$ is strongly locally convex}: $\mathcal{S}$ is locally strongly convex since the pull-back of $g^L$ is negative definite and $L>0$, similar to the arguments of Beem \cite[Lemma 1]{Beem}. As a remark, $\mathcal{T}_x$ contains an entire connected component of the indicatrix at $x$, i.e.\ of the set $\{\dot x\in T_xM | L(x,\dot x) = 1 \}$.
\end{enumerate}
Having established the closedness, connectedness and the strong local convexity of $\mathcal{S}$ it follows that $\mathcal{S}$ is convex, see~\cite{SSVLocalConvex} for the theorem. Eventually the convexity of $\mathcal{T}_x$ follows from its conicity and the convexity of $\mathcal{S}$.

\section{Randers, Kropina/Bogoslowski and m-th root Finsler spacetimes}\label{app:Ex}
Below our definition of Finsler spacetimes, Definition \ref{def:FST}, we claimed certain type of Finsler spacetimes are included by our definition. Here we provide some details how this can can be seen. An extensive analysis of these examples will be the subject of an upcoming publication.

Randers and Bogoslovsky/Kropina metrics are defined in terms of a Lorentzian metric $g$ and a $1$-form $A$ from which we can build the functions $A(\dot x) = A_a(x)\dot x^a$ and $g(\dot x, \dot x) = g_{ab}(x)\dot x^a \dot x^b$ on $TM$.

Let us start with Randers Finsler spacetimes whose Finsler Lagrange function is
\begin{align}
	L = \left(\sqrt{|g(\dot x, \dot x)|} + A(\dot x)\right)^2\,.
\end{align}
Calculating the corresponding $L$-metric yields
\begin{align}
	g^L_{ab} = \frac{g(\dot x, \dot x)}{|g(\dot x, \dot x)|} \left( g_{ab} \left(1 + \frac{A(\dot x)}{\sqrt{|g(\dot x, \dot x)|}} \right) + \frac{1}{\sqrt{|g(\dot x, \dot x)|}}(A_a \dot x_b  + A_b \dot x_a) \right)+ A_a A_b - \frac{A(\dot x)\ \dot x_a \dot x_b}{\sqrt{|g(\dot x, \dot x)|}^3}
\end{align}
and, calculating its determinant with help of the Levi-Civita $\varepsilon$-symbol\footnote{The determinant of a metric derived with help of the components of the totally antisymmetric Levi-Civita tensor density as $\varepsilon$-symbol $\det (g^L_{ab}) = \frac{1}{4!}\varepsilon^{a_1a_2a_3a_4}\varepsilon^{b_1b_2b_3b_4}g^L_{a_1b_1}g^L_{a_2b_2}g^L_{a_3b_3}g^L_{a_4b_4}$\,.} and the Mathematica tensor algebra extension xAct \cite{xact}, as shown in the included file \hyperlink{randers}{DetGLRanders.nb}, yields for $g$-timelike vectors ($g(\dot x,\dot x)>0$)
\begin{align}\label{eq:detgLR1}
	\det(g^L_{ab}) = \det(g_{ab})\frac{\left(\sqrt{g(\dot x, \dot x)} + A(\dot x)\right)^5}{\sqrt{g(\dot x, \dot x)}^5}\,.
\end{align}
This result coincides with the one derived in \cite[p. 284]{Shen} for the case of a positive definite metric $g$. For $g$-spacelike vectors ($g(\dot x,\dot x)<0$) the determinant becomes
\begin{align}\label{eq:detgLR2}
\det(g^L_{ab}) = \det(g_{ab})\left( 1 + \frac{5 A(\dot x)^2}{g(\dot x,\dot x)^2}(A(\dot x)^2 - 2 g(\dot x,\dot x)) + \frac{A(\dot x)}{\sqrt{-g(\dot x,\dot x)}^5}\left( ( A(\dot x)^2 - 5 g(\dot x, \dot x) )^2 - 20 g(\dot x, \dot x)^2\right)\right)\,.
\end{align}

For $0 < g^{-1}(A,A) < 1$, in particular $A$ being $g$-timelike and with bounded norm, $L=0$ yields a non-trivial null cone, which is the same as the null cone of the metric
\begin{align}
\tilde g(\dot x, \dot x) = g(\dot x, \dot x) - A(\dot x)A(\dot x) = 0\,.
\end{align}
For $A$ as above, this cone is sharper then the cone of $g$. Hence the connected component containing $A^\# = g^{-1}(A,\cdot)$ is bounded by $L=0$ and not intersected by the cone $g(\dot x, \dot x)=0$. On this connected component the signature of $g^L$ is identical to the one of $g$: the determinant of $g^L$, given by \eqref{eq:detgLR1} on the set of consideration, does not vanish and its eigenvalues can thus not change sign compared to the ones of $g$, since certainly for $\epsilon>A(\dot x)>0$ for any infinitesimal $\epsilon > 0$ the eigenvalues of $g$ and $g^L$ have identical signs. Hence we choose the connected component containing $A^\#$ as our future pointing timelike directions $\mathcal{T}_x$ for a Randers Finsler spacetime. Details about an analogue argument about the signature of the Finsler metric in Randers spaces, which are built from a positive definite metric, can be found in \cite[p.284]{Shen}.

Let us turn now to Bogoslovsky and Kropina Finsler spacetimes, which are special cases of Finsler Lagrangians of the type
\begin{align}\label{eq:ex2}
	L = |g(\dot x, \dot x)|^{1-q}A(\dot x)^{2q}\,.
\end{align}
Again we can calculate the determinant of the $L$-metric with help of the $\varepsilon$-symbol and the computer algebra program xAct, as shown in the included file \hyperlink{kropbog}{DetGLKropinaBogoslovsky.nb}. On the set for $g$-timelike vectors ($g(\dot x,\dot x)>0$) it is given by
\begin{align}
	\det(g^L_{ab}) = \det(g_{ab})(1-q)^3 \left[(1+q) A(\dot x)^2  - q g^{-1}(A,A) g(\dot x, \dot x) \right] g(\dot x, \dot x)^{-4q} (A(\dot x))^{-2(1-4q)}
\end{align}
The conditions on $A$ to yield a viable Finsler spacetime geometry depend on the value of $q$. For $q=0$ this expression is identical to $\det(g_{ab})$, for $q=1$ the $L$-metric is always degenerate. The more interesting cases are
\begin{enumerate}
	\item $q>1$: $\det(g^L_{ab})$ is negative if and only if $(1+q) A(\dot x)^2  - q g^{-1}(A,A) g(\dot x, \dot x) < 0$,
	\item $q=-1$: $\det(g^L_{ab})$ is negative if and only if $g^{-1}(A,A) g(\dot x, \dot x)>0$,
	\item $q<1$ and $q\neq-1,q\neq0$: $\det(g^L_{ab})$ is negative if and only if $(1+q) A(\dot x)^2  - q g^{-1}(A,A) g(\dot x, \dot x) > 0$.
\end{enumerate}
As in the Randers case, it depends on the causal character of $A$ on which connected subsets of $T_xM$ these conditions are satisfied. Now choose $A$ for instance to be $g$-timelike. In the $q>1$ case this implies that $\det g^L$ is always positive, since $(1+q) A(\dot x)^2  - q g^{-1}(A,A) g(\dot x, \dot x)>0$ by the reverse Cauchy-Schwartz inequality, and hence \eqref{eq:ex2} is not a Finsler spacetime. For $-1 < q < 1$, we find that $\mathcal{T}_x$ is the component where $g(\dot x, \dot x)>0$, i.e. the set of $g$-timelike vectors, which also satisfy $A(\dot x)\neq 0$. The signature of $g^L$ is again identical to the one of $g$, since it can not change as long as $\det(g^L_{ab})$ does not vanish. Without going into further details here we conclude that there exist Bogoslovsky and Kropina Finsler spacetimes for the right choice of~$A$.

Last but not least we mentioned the $m$-th root Finsler Lagrangians
\begin{align}
	L = |G_{a_1 \cdots a_m}(x)\dot x^{a_1}\ldots \dot x^{a_m}|^{\frac{2}{m}}\,.
\end{align}
A necessary condition for them to define a Finsler spacetime according to our definition is that the polynomial $G_{a_1 \cdots a_m}(x)\dot x^{a_1}\ldots\dot x^{a_m}$ is hyperbolic. Hyperbolic polynomials possess by definition so called hyperbolicity cones whose interior is the connected component $\mathcal{T}_x$. Hyperbolic polynomials are often discussed in the context of hyperbolic partial differential equations \cite[Sec.12.4]{Hoermander2} and the causal structure of physical field theories \cite{Raetzel:2010je}.

\section{Evaluation of the integrals $I_2$ and $I_3$}\label{app:I2I3}
In section \ref{sec:vari} we encountered the two integrals
\begin{align}
I_2 = \int_{D^+} \frac{1}{L} \frac{d\bar R}{dt}|_{t=0}\ dV^+_0\quad \textrm{ and }\quad I_3 =\int_{D^+} \frac{R}{L} \frac{1}{|\det g^L|}\frac{d |\det \bar g^{\bar L}|}{dt}|_{t=0}\ dV^+_0 \,,
\end{align}
which we will evaluate here in detail.

\subsection{The integral $I_2$}
The first step is to investigate the variation of the geodesic spray coefficients \eqref{eq:geodspray}, since they are the building blocks of the curvature scalar \eqref{eq:FinslerRicciS}. Denoting the derivatives of the variation by $v_i = v_{\cdot i}$ and $v_{ij} = v_{\cdot i \cdot j}$ we have
\begin{align}
	\bar L_{\cdot i} \overset{t^1}{\simeq} L_{\cdot i} + 2 tv_i,\quad \bar g^L_{ij} \overset{t^1}{\simeq} g^L_{ij} + t v_{ij},\quad \bar g^{Lij} \overset{t^1}{\simeq} g^{Lij} - t v^{ij}\,,
\end{align}
where $v^{ij} = g^{Lmi} g^{Lnj}v_{mn}$ and the symbol $\overset{t^1}{\simeq}$ means equality modulo higher than the first power in $t$. As a consequence the variation of the geodesic spray coefficients becomes
\begin{align}\label{eq:varG}
	2\bar G^i = \frac{1}{2}\bar g^{Lij}(\dot x^k \bar L_{\cdot j,k} - \bar L_{,j}) \overset{t^1}{\simeq} 2 G^i + t g^{Lij}(\dot x^k v_{j,k} - v_{,j} - 2 G^k v_{jk})
\end{align}
Since $\dot x$-differentiation preserves the tensor character, $v_i$ are covector components and it makes sense to speak about covariant derivatives thereof with respect to the Chern connection \eqref{eq:cherndef}: $v_i{}_{|j} = v_{i,j} - G^k{}_j v_{ik} - \Gamma^k{}_{ij}v_k$. Contracting the last index with $\dot x^j$ and taking into account the identities $\Gamma^k{}_{ij}\dot x^j = G^k{}_i$ and $G^k{}_j \dot x^j = 2 G^k$ yields
\begin{align}
	v_{i}{}_{,j}\dot x^j = v_i{}_{|j}\dot x^j + 2 G^k v_{ik} + G^k{}_i v_k,\quad \textrm{ and }\quad v_{,i} = v_{|i} + G^k{}_i v_k\,.
\end{align}
Substituting these equalities into \eqref{eq:varG} yields
\begin{align}
	2\bar G^i \overset{t^1}{\simeq} 2 G^i + 2 t A^i,\quad \textrm{ with }\quad A^i = \frac{1}{2}g^{Lij}(\nabla v_j - v_{|j})\,,
\end{align}
which agrees with the expression found in \cite{Chen-Shen}.

The second step is the variation of the Finsler Ricci scalar. By definition, we find the variation of the non-linear connection coefficients \eqref{eq:nldef} to be
\begin{align}
	\bar G^i{}_{j} \overset{t^1}{\simeq} G^i{}_j + t A^i{}_j,\quad \textrm{ where }\quad A^i{}_j = A^i{}_{\cdot j}\,.
\end{align}
Further, using the Landsberg tensor \eqref{eq:landsberg} $G^{i}{}_{jk} = G^i{}_{j\cdot k} = \Gamma^i{}_{jk} + P^i{}_{jk}$ and the definition of the Chern-Rund covariant derivative~\eqref{eq:cherndef}, we may write the variation of the non-linear curvature tensor \eqref{eq:nlcurv} as
\begin{align}
	\bar R^i{}_{jk}
	&= \bar{\delta}_k \bar{G}^i{}_j - \bar{\delta}_j \bar{G}^i{}_k\nonumber = \bar{G}^i{}_{j,k} - \bar{G}^i{}_{k,j} -  \bar{G}^l{}_k \bar{G}^i{}_{j\cdot l} + \bar{G}^l{}_j \bar{G}^i{}_{k\cdot l}\nonumber\\
	&\overset{t^1}{\simeq} G^i{}_{j,k} - G^i{}_{k,j}  - G^l{}_k G^i{}_{j\cdot l} + G^l{}_j G^i{}_{k\cdot l} + t( A^i{}_{j,k} - A^i{}_{k,j} -  A^l{}_kG^i{}_{j\cdot l} -  G^l{}_k A^i{}_{j\cdot l}  +  A^l{}_jG^i{}_{k\cdot l} +  G^l{}_j A^i{}_{k\cdot l})\nonumber\\
	&=R^i{}_{jk} + t (\delta_k A^i{}_j - \delta_j A^i{}_k -  A^l{}_kG^i{}_{j\cdot l} +  A^l{}_jG^i{}_{k\cdot l})\nonumber\\
	&=R^i{}_{jk} + t (A^i{}_{j|k} - A^i{}_{k|j} - \Gamma^i{}_{kl}A^l{}_j + \Gamma^i{}_{jl} A^l{}_k - A^l{}_kG^i{}_{j\cdot l} +  A^l{}_jG^i{}_{k\cdot l})\nonumber\\
	&=R^i{}_{jk} + t(A^i{}_{j|k}-A^i{}_{k|j} + A^l{}_j P^i{}_{kl} - A^l{}_k P^i{}_{jl})\,.
\end{align}
Contracting this equation with $\dot x^k$ and taking into account $P^i{}_{jk}\dot x^k = 0$ as well as $A^l{}_k \dot x^k = 2 A^l$ we get
\begin{align}
	\bar R^i{}_{j} \overset{t^1}{\simeq} R^i{}_{j} + t (\nabla A^i{}_j - 2 A^i{}_{|j} - 2 A^lP^i{}_{jl})\,,
\end{align}
which finally leads us to $\bar R \overset{t^1}{\simeq} R + t(\nabla A^i{}_i - 2 A^i{}_{|i} - 2 A^l P_l)$.

The third and final step is the isolation of the variation $2v$ in the integral. To do so we substitute our findings into the integral and use the identity \eqref{rel_integrals_SM_PTM} to equate
\begin{align}
	I_2 = \int_{D^+} \frac{1}{L} (\nabla A^i{}_i - 2 A^i{}_{|i} - 2 A^l P_l)\ dV^+_0\,.
\end{align}
Observe that the first term gives a boundary term which we can neglect. This is so since $\nabla L = 0$ and thus $\nabla(\frac{A^i{}_i}{L})$ is a divergence of a $0$-homogeneous vector field according to \eqref{eq:difnabla}. It remains to write $L^{-1}A^i{}_{|i} = \mathrm{div}(L^{-1}A^i\delta_i) + L^{-1}A^iP_i$, see \eqref{divergence_horizontal}, to find
\begin{align}
	I_2 = - \int_{D^+} \frac{4}{L} A^l P_l\ dV^+_0\,.
\end{align}
Using the definition of $A^i$ we can expand the integrand as
\begin{align}
	- 4L^{-1}A^iP_i = 2\big[(L^{-1}vP^i)_{|i} - L^{-1}vP^i{}_{|i} - \nabla(L^{-1}v_iP^i) + L^{-1}v_i\nabla P^i\big]\,,
\end{align}
and observe that $\nabla(L^{-1}v_iP^i)$ is a total divergence of a $0$-homogeneous vector field again and that $(L^{-1}vP^i)_{|i} = \mathrm{div}(L^{-1}vP^i\delta_i) + L^{-1}vP^iP_i$, which implies
\begin{align}
	- 4L^{-1}A^iP_i =\mathrm{div}(\ldots) + 2 L^{-1}( P^i P_i - P^i{}_{|i})v + 2 L^{-1}v_m \nabla P^m\,.
\end{align}
Hence, the last we need to investigate is $2 L^{-1}v_m \nabla P^m$. By the Leibniz rule we have
\begin{align}
	L^{-1}v_m \nabla P^m = (L^{-1}v\nabla P^i)_{\cdot i} - L^{-1}{}_{\cdot i}v\nabla P^i - L^{-1}v(\nabla P^i)_{\cdot i}\,.
\end{align}
The second term on the right hand side vanishes since $\dot x_i\nabla P^i = \nabla(\dot x^i P_i) = 0$. The first term can be written into a divergence of a vertical vector field according to \eqref{divergence_vertical} as
$(L^{-1}v\nabla P^i)_{\cdot i} = \mathrm{div}(L^{-1}v\nabla P^i\dot{\partial}_i) - 2 C_i L^{-1}v\nabla P^i + 0$. Summing up yields
\begin{align}
	L^{-1}v_m \nabla P^m = \mathrm{div}(\ldots) - v L^{-1}(2 C_i \nabla P^i + (\nabla P^i)_{\cdot i})
\end{align}
It is now straightforward to see that $2 C_i \nabla P^i + (\nabla P^i)_{\cdot i} = g^{Lij}(\nabla P_i)_{\cdot j}$ and so altogether
\begin{align}
	- 4L^{-1}A^iP_i = \mathrm{div}(\ldots) + 2 L^{-1}( P^i P_i - P^i{}_{|i} - g^{Lij}(\nabla P_i)_{\cdot j})v \,.
\end{align}

Finally the integral $I_2$ becomes, neglecting the boundary terms,
\begin{align}
	I_2 = \int_{D^+} \frac{2}{L}( P^i P_i - P^i{}_{|i} - g^{Lij}(\nabla P_i)_{\cdot j})v\ dV^+_0\,,
\end{align}
which is the desired expression \eqref{eq:int2}.

\subsection{The integral $I_3$}
For the integral $I_3$ observe that
\begin{align}
	|\det \bar g^L| \overset{t^1}{\simeq} |\det g^L| + t g^{Lij} v_{ij} |\det g^L|\,,
\end{align}
where we used the derivative formula for the determinant
\begin{align}
	\frac{d}{dt}\det \bar g^L = \det \bar g^L \bar g^{Lij}\frac{d}{dt} \bar g^L_{ij}\,.
\end{align}
The integral thus becomes
\begin{align}
	I_3 =\int_{D^+} \frac{R}{L} g^{Lij} v_{ij}\ dV^+_0
\end{align}
The Leibniz rule, together with \eqref{divergence_vertical} imply
\begin{align}
	 L^{-1} R g^{Lij} v_{ij}
	 &= \mathrm{div}( L^{-1} R g^{Lij}v_j \dot{\partial}_i) - 2 L^{-1} R C^i v_i + 8 L^{-2} R v - (L^{-1} R g^{Lij})_{\cdot i}v_j\\
	 &= \mathrm{div}( L^{-1} R g^{Lij}v_j \dot{\partial}_i) + 12 L^{-2} R v - L^{-1} R_{\cdot i} v^i\\
	 &= \mathrm{div}( L^{-1} R g^{Lij}v_j \dot{\partial}_i) - \mathrm{div}( L^{-1} R_{\cdot{} j} g^{Lij}v \dot{\partial}_i) + L^{-1}v g^{Lij}R_{\cdot i\cdot j}\,.
\end{align}
Hence the integral turns out to be, neglecting again the boundary terms,
\begin{align}
	I_3 =\int_{D^+} \frac{1}{L}g^{Lij}R_{\cdot i\cdot j}v\ dV^+_0\,,
\end{align}
which again is the result presented in \eqref{eq:int3}.

\section{Proof of Lemma \ref{lem:intbdry}}\label{app:Lemmaint}
In Lemma \ref{lem:intbdry} we displayed useful formulae to understand the Finsler gravity action in the case of pseudo-Riemannian geometry. Here we provide the proof of the equations \eqref{eq:L3e1}, \eqref{eq:L3e2} and \eqref{eq:gxx}.

For the first equation we expand
\begin{align}
	g^{Lij} (Lf)_{\cdot i\cdot j}dV^+_0 = g^{Lij}(2 g^L_{ij} f +2 L_{\cdot i}f_{\cdot j} + L f_{\cdot i \cdot j})dV^+_0\,.
\end{align}
The $0$-homogeneity of $f$ with respect to $\dot x$ and $\textrm{dim}(M)=4$ implies
\begin{align}
	g^{Lij} (Lf)_{\cdot i\cdot j}dV^+_0 = 8 f dV^+_0 + L g^{Lij} f_{\cdot i \cdot j}dV^+_0\,.
\end{align}
The last step is to show that the last term in this sum is a total derivative term. This can be seen from
\begin{align}
	L g^{Lij} f_{\cdot i \cdot j}dV^+_0 = L g^{Lij} f_{\cdot i \cdot j} \tfrac{|\det g^L|}{L^2}\mathbf{i}_\mathbb{C}\textrm{Vol} = (g^{Lij} f_{\cdot i} \tfrac{|\det g^L|}{L})_{\cdot j}\mathbf{i}_\mathbb{C}\textrm{Vol} = d(i_X dV^+_{0})\,,
\end{align}
where $X = L g^{Lij} f_{\cdot i}\dot \partial_i$.

The second equation can be proven  by expanding
\begin{align}
	L^{-1}g^{Lij}(L^2 f)_{\cdot i\cdot j} = L^{-1} g^{Lij}(2 g^L_{ij}(Lf) + 2 L_{\cdot i}(Lf)_{\cdot j} + L (Lf)_{\cdot i \cdot j}) = 16 f + g^{Lij}(Lf)_{\cdot i \cdot j}\,.
\end{align}
Using the first equation in the last term we obtain the desired result.

\bibliographystyle{utphys}
\bibliography{VC}

\providecommand{\href}[2]{#2}\begingroup\raggedright\begin{thebibliography}{10}

\bibitem{Riess:1998cb}
{\bf Supernova Search Team} Collaboration, A.~G. Riess {\em et al.},
  ``{Observational evidence from supernovae for an accelerating universe and a
  cosmological constant},'' \href{http://dx.doi.org/10.1086/300499}{{\em
  Astron. J.} {\bf 116} (1998)  1009--1038},
\href{http://arxiv.org/abs/astro-ph/9805201}{{\tt arXiv:astro-ph/9805201
  [astro-ph]}}.

\bibitem{Peebles:2002gy}
P.~J.~E. Peebles and B.~Ratra, ``{The Cosmological constant and dark energy},''
  \href{http://dx.doi.org/10.1103/RevModPhys.75.559}{{\em Rev. Mod. Phys.} {\bf
  75} (2003)  559--606}, \href{http://arxiv.org/abs/astro-ph/0207347}{{\tt
  arXiv:astro-ph/0207347 [astro-ph]}}.
[,592(2002)].

\bibitem{Corbelli:1999af}
E.~Corbelli and P.~Salucci, ``{The Extended Rotation Curve and the Dark Matter
  Halo of M33},''
  \href{http://dx.doi.org/10.1046/j.1365-8711.2000.03075.x}{{\em Mon. Not. Roy.
  Astron. Soc.} {\bf 311} (2000)  441--447},
\href{http://arxiv.org/abs/astro-ph/9909252}{{\tt arXiv:astro-ph/9909252
  [astro-ph]}}.

\bibitem{Clowe:2006eq}
D.~Clowe, M.~Bradac, A.~H. Gonzalez, M.~Markevitch, S.~W. Randall, C.~Jones,
  and D.~Zaritsky, ``{A direct empirical proof of the existence of dark
  matter},'' \href{http://dx.doi.org/10.1086/508162}{{\em Astrophys. J.} {\bf
  648} (2006)  L109--L113},
\href{http://arxiv.org/abs/astro-ph/0608407}{{\tt arXiv:astro-ph/0608407
  [astro-ph]}}.

\bibitem{Papantonopoulos:2007zz}
L.~Papantonopoulos, ed.,
  \href{http://dx.doi.org/10.1007/978-3-540-71013-4}{{\em {The invisible
  universe: Dark matter and dark energy. Proceedings, 3rd Aegean School,
  Karfas, Greece, September 26-October 1, 2005}}}, vol.~720.
\newblock
2007.
\newblock

\bibitem{Hohmann:2016pyt}
M.~Hohmann and C.~Pfeifer, ``{Geodesics and the magnitude-redshift relation on
  cosmologically symmetric Finsler spacetimes},''
  \href{http://dx.doi.org/10.1103/PhysRevD.95.104021}{{\em Phys. Rev.} {\bf
  D95} (2017) no.~10, 104021},
\href{http://arxiv.org/abs/1612.08187}{{\tt arXiv:1612.08187 [gr-qc]}}.

\bibitem{Basilakos:2013hua}
S.~Basilakos, A.~P. Kouretsis, E.~N. Saridakis, and P.~Stavrinos, ``{Resembling
  dark energy and modified gravity with Finsler-Randers cosmology},''
  \href{http://dx.doi.org/10.1103/PhysRevD.88.123510}{{\em Phys. Rev.} {\bf
  D88} (2013)  123510},
\href{http://arxiv.org/abs/1311.5915}{{\tt arXiv:1311.5915 [gr-qc]}}.

\bibitem{Papagiannopoulos:2017whb}
G.~Papagiannopoulos, S.~Basilakos, A.~Paliathanasis, S.~Savvidou, and P.~C.
  Stavrinos, ``{Finsler–Randers cosmology: dynamical analysis and growth of
  matter perturbations},''
  \href{http://dx.doi.org/10.1088/1361-6382/aa8be1}{{\em Class. Quant. Grav.}
  {\bf 34} (2017) no.~22, 225008},
\href{http://arxiv.org/abs/1709.03748}{{\tt arXiv:1709.03748 [gr-qc]}}.

\bibitem{Chang:2009pa}
Z.~Chang and X.~Li, ``{Modified Friedmann model in Randers-Finsler space of
  approximate Berwald type as a possible alternative to dark energy
  hypothesis},'' \href{http://dx.doi.org/10.1016/j.physletb.2009.05.001}{{\em
  Phys. Lett.} {\bf B676} (2009)  173--176},
\href{http://arxiv.org/abs/0901.1023}{{\tt arXiv:0901.1023 [gr-qc]}}.

\bibitem{Basilakos:2017rgc}
S.~Basilakos and S.~Nesseris, ``{Conjoined constraints on modified gravity from
  the expansion history and cosmic growth},''
  \href{http://dx.doi.org/10.1103/PhysRevD.96.063517}{{\em Phys. Rev.} {\bf
  D96} (2017) no.~6, 063517},
\href{http://arxiv.org/abs/1705.08797}{{\tt arXiv:1705.08797 [astro-ph.CO]}}.

\bibitem{Hehl}
F.~W. Hehl and Y.~N. Obukhov, {\em Foundations of Classical Electrodynamics}.
\newblock Birkh\"auser Boston, 2003.

\bibitem{Perlick}
V.~Perlick, \href{http://dx.doi.org/10.1007/3-540-46662-2}{{\em Ray Optics,
  {Fermat's} Principle, and Applications to General Relativity}}.
\newblock No.~61 in Lecture Notes in Physics. Springer, 2000.

\bibitem{Cerveny}
V.~Cerveny, ``Fermat's variational principle for anisotropic inhomogeneous
  media,'' {\em Studia Geophysica et Geodaetica} {\bf 46} (2002)  567.

\bibitem{Seismic1}
P.~L. Antonelli, S.~F. Rutz, and M.~A. Slawinski, ``{A geometrical foundation
  for seismic ray theory based on modern Finsler geometry},'' in {\em Finsler
  and Lagrange Geometries}, A.~M. and P.~L. Antonelli, eds., p.~17.
\newblock Kluwer Academic Publisher, Dordrecht, 2003.

\bibitem{Amelino-Camelia:2014rga}
G.~Amelino-Camelia, L.~Barcaroli, G.~Gubitosi, S.~Liberati, and N.~Loret,
  ``{Realization of doubly special relativistic symmetries in Finsler
  geometries},'' \href{http://dx.doi.org/10.1103/PhysRevD.90.125030}{{\em Phys.
  Rev.} {\bf D90} (2014) no.~12, 125030},
\href{http://arxiv.org/abs/1407.8143}{{\tt arXiv:1407.8143 [gr-qc]}}.

\bibitem{Lobo:2016xzq}
I.~P. Lobo, N.~Loret, and F.~Nettel, ``{Investigation of Finsler geometry as a
  generalization to curved spacetime of Planck-scale-deformed relativity in the
  de Sitter case},'' \href{http://dx.doi.org/10.1103/PhysRevD.95.046015}{{\em
  Phys. Rev.} {\bf D95} (2017) no.~4, 046015},
\href{http://arxiv.org/abs/1611.04995}{{\tt arXiv:1611.04995 [gr-qc]}}.

\bibitem{Letizia:2016lew}
M.~Letizia and S.~Liberati, ``{Deformed relativity symmetries and the local
  structure of spacetime},''
  \href{http://dx.doi.org/10.1103/PhysRevD.95.046007}{{\em Phys. Rev.} {\bf
  D95} (2017) no.~4, 046007},
\href{http://arxiv.org/abs/1612.03065}{{\tt arXiv:1612.03065 [gr-qc]}}.

\bibitem{Barcaroli:2017gvg}
L.~Barcaroli, L.~K. Brunkhorst, G.~Gubitosi, N.~Loret, and C.~Pfeifer,
  ``{Curved spacetimes with local $\kappa$-Poincar\'e dispersion relation},''
  \href{http://dx.doi.org/10.1103/PhysRevD.96.084010}{{\em Phys. Rev.} {\bf
  D96} (2017) no.~8, 084010},
\href{http://arxiv.org/abs/1703.02058}{{\tt arXiv:1703.02058 [gr-qc]}}.

\bibitem{Raetzel:2010je}
D.~Raetzel, S.~Rivera, and F.~P. Schuller, ``{Geometry of physical dispersion
  relations},'' \href{http://dx.doi.org/10.1103/PhysRevD.83.044047}{{\em Phys.
  Rev.} {\bf D83} (2011)  044047},
\href{http://arxiv.org/abs/1010.1369}{{\tt arXiv:1010.1369 [hep-th]}}.

\bibitem{Bluhm:2005uj}
R.~Bluhm, ``{Overview of the SME: Implications and phenomenology of Lorentz
  violation},'' \href{http://dx.doi.org/10.1007/3-540-34523-X_8}{{\em Lect.
  Notes Phys.} {\bf 702} (2006)  191--226},
  \href{http://arxiv.org/abs/hep-ph/0506054}{{\tt arXiv:hep-ph/0506054
  [hep-ph]}}.
[,191(2005)].

\bibitem{Cohen:2006ky}
A.~G. Cohen and S.~L. Glashow, ``{Very special relativity},''
  \href{http://dx.doi.org/10.1103/PhysRevLett.97.021601}{{\em Phys. Rev. Lett.}
  {\bf 97} (2006)  021601},
\href{http://arxiv.org/abs/hep-ph/0601236}{{\tt arXiv:hep-ph/0601236
  [hep-ph]}}.

\bibitem{Gibbons:2007iu}
G.~W. Gibbons, J.~Gomis, and C.~N. Pope, ``{General very special relativity is
  Finsler geometry},'' \href{http://dx.doi.org/10.1103/PhysRevD.76.081701}{{\em
  Phys. Rev.} {\bf D76} (2007)  081701},
\href{http://arxiv.org/abs/0707.2174}{{\tt arXiv:0707.2174 [hep-th]}}.

\bibitem{Fuster:2018djw}
A.~Fuster, C.~Pabst, and C.~Pfeifer, ``{Berwald spacetimes and very special
  relativity},'' \href{http://dx.doi.org/10.1103/PhysRevD.98.084062}{{\em Phys.
  Rev.} {\bf D98} (2018) no.~8, 084062},
\href{http://arxiv.org/abs/1804.09727}{{\tt arXiv:1804.09727 [gr-qc]}}.

\bibitem{Punzi:2007di}
R.~Punzi, M.~N.~R. Wohlfarth, and F.~P. Schuller, ``{Propagation of light in
  area metric backgrounds},''
  \href{http://dx.doi.org/10.1088/0264-9381/26/3/035024}{{\em Class. Quant.
  Grav.} {\bf 26} (2009)  035024},
\href{http://arxiv.org/abs/0711.3771}{{\tt arXiv:0711.3771 [hep-th]}}.

\bibitem{Punzi:2009}
R.~Punzi, F.~P. Schuller, and M.~N.~R. Wohlfarth, ``{Massive motion in area
  metric spacetimes},''
  \href{http://dx.doi.org/10.1103/PhysRevD.79.124025}{{\em Phys. Rev.} {\bf
  D79} (2009)  124025},
\href{http://arxiv.org/abs/0901.3264}{{\tt arXiv:0901.3264 [gr-qc]}}.

\bibitem{Pfeifer:2019wus}
C.~Pfeifer, ``{Finsler spacetime geometry in Physics},''
\href{http://arxiv.org/abs/1903.10185}{{\tt arXiv:1903.10185 [gr-qc]}}.

\bibitem{Asanov}
G.~S. Asanov, {\em Finsler Geometry, Relativity and Gauge Theories}.
\newblock D. Reidel Publishing Company, 1985.

\bibitem{Pfeifer:2011tk}
C.~Pfeifer and M.~N.~R. Wohlfarth, ``{Causal structure and electrodynamics on
  Finsler spacetimes},''
  \href{http://dx.doi.org/10.1103/PhysRevD.84.044039}{{\em Phys. Rev.} {\bf
  D84} (2011)  044039},
\href{http://arxiv.org/abs/1104.1079}{{\tt arXiv:1104.1079 [gr-qc]}}.

\bibitem{Lammerzahl:2012kw}
C.~Lammerzahl, V.~Perlick, and W.~Hasse, ``{Observable effects in a class of
  spherically symmetric static Finsler spacetimes},''
  \href{http://dx.doi.org/10.1103/PhysRevD.86.104042}{{\em Phys. Rev.} {\bf
  D86} (2012)  104042},
\href{http://arxiv.org/abs/1208.0619}{{\tt arXiv:1208.0619 [gr-qc]}}.

\bibitem{Javaloyes:2018lex}
M.~A. Javaloyes and M.~Sánchez, ``{On the definition and examples of cones and
  Finsler spacetimes},''
\href{http://arxiv.org/abs/1805.06978}{{\tt arXiv:1805.06978 [math.DG]}}.

\bibitem{Bejancu-Farran}
A.~Bejancu and H.~R. Farran, {\em Geometry of Pseudo-Finsler Submanifolds}.
\newblock Springer Netherlands, 2000.

\bibitem{Beem}
J.~K. Beem, ``Indefinite {F}insler spaces and timelike spaces,'' {\em Can. J.
  Math.} {\bf 22} (1970)  1035.

\bibitem{Kouretsis:2008ha}
A.~P. Kouretsis, M.~Stathakopoulos, and P.~C. Stavrinos, ``{The general very
  special relativity in Finsler cosmology},''
  \href{http://dx.doi.org/10.1103/PhysRevD.79.104011}{{\em Phys. Rev.} {\bf
  D79} (2009)  104011},
\href{http://arxiv.org/abs/0810.3267}{{\tt arXiv:0810.3267 [gr-qc]}}.

\bibitem{Stavrinos2014}
P.~Stavrinos, O.~Vacaru, and S.~I. Vacaru, ``{Modified Einstein and Finsler
  like theories on tangent Lorentz bundles},''
  \href{http://dx.doi.org/10.1142/S0218271814500941}{{\em International Journal
  of Modern Physics D} {\bf 23} (2014) no.~11, 1450094},
  \href{http://arxiv.org/abs/http://www.worldscientific.com/doi/pdf/10.1142/S0218271814500941}{{\tt
  http://www.worldscientific.com/doi/pdf/10.1142/S0218271814500941}}.
  \url{http://www.worldscientific.com/doi/abs/10.1142/S0218271814500941}.

\bibitem{Voicu:2009wi}
N.~Voicu, ``{New considerations on Hilbert action and Einstein equations in
  anisotropic spaces},'' \href{http://dx.doi.org/10.1063/1.3506066}{{\em AIP
  Conf. Proc.} {\bf 1283} (2010)  249--257},
\href{http://arxiv.org/abs/0911.5034}{{\tt arXiv:0911.5034 [gr-qc]}}.

\bibitem{Bucataru}
R.~Miron and I.~Bucataru, {\em Finsler Lagrange geometry}.
\newblock Editura Academiei Romane, 2007.

\bibitem{Rutz}
S.~Rutz, ``{A Finsler generalisation of Einstein's vacuum field equations},''
  {\em General Relativity and Gravitation} {\bf 25} (1993)  1139--1158.

\bibitem{Pfeifer}
C.~Pfeifer and M.~N.~R. Wohlfarth, ``{Finsler geometric extension of Einstein
  gravity},'' \href{http://dx.doi.org/10.1103/PhysRevD.85.064009}{{\em Phys.
  Rev.} {\bf D85} (2012)  064009},
\href{http://arxiv.org/abs/1112.5641}{{\tt arXiv:1112.5641 [gr-qc]}}.

\bibitem{Vacaru:2010fi}
S.~I. Vacaru, ``{Principles of Einstein-Finsler Gravity and Perspectives in
  Modern Cosmology},'' \href{http://dx.doi.org/10.1142/S0218271812500721}{{\em
  Int. J. Mod. Phys.} {\bf D21} (2012)  1250072},
\href{http://arxiv.org/abs/1004.3007}{{\tt arXiv:1004.3007 [math-ph]}}.

\bibitem{Minguzzi:2014fxa}
E.~Minguzzi, ``{The connections of pseudo-Finsler spaces},''
  \href{http://dx.doi.org/10.1142/S0219887814600251,
  10.1142/S0219887815920012}{{\em Int. J. Geom. Meth. Mod. Phys.} {\bf 11}
  (2014) no.~07, 1460025}, \href{http://arxiv.org/abs/1405.0645}{{\tt
  arXiv:1405.0645 [math-ph]}}.
[Erratum: Int. J. Geom. Meth. Mod. Phys.12,no.7,1592001(2015)].

\bibitem{Krupka-book}
D.~Krupka, {\em Introduction to Global Variational Geometry}.
\newblock Springer, 2015.

\bibitem{Chen-Shen}
B.~Chen and Y.-B. Shen, ``{On a class of critical Riemann-Finsler metrics},''
  {\em Publ. Math. Debrecen} {\bf 72/3-4} (2008)  451--468.

\bibitem{Voicu-Krupka}
N.~Voicu and D.~Krupka, ``Canonical variational completion of differential
  equations,'' \href{http://dx.doi.org/10.1063/1.4918789}{{\em Journal of
  Mathematical Physics} {\bf 56} (2015) no.~4, 043507},
  \href{http://arxiv.org/abs/https://doi.org/10.1063/1.4918789}{{\tt
  https://doi.org/10.1063/1.4918789}}.

\bibitem{Caponio:2017lgy}
E.~Caponio and G.~Stancarone, ``{On Finsler spacetimes with a timelike Killing
  vector field},'' \href{http://dx.doi.org/10.1088/1361-6382/aab0d9}{{\em
  Class. Quant. Grav.} {\bf 35} (2018) no.~8, 085007},
\href{http://arxiv.org/abs/1710.05318}{{\tt arXiv:1710.05318 [math.DG]}}.

\bibitem{Caponio:2015hca}
E.~Caponio and G.~Stancarone, ``{Standard static Finsler spacetimes},''
  \href{http://dx.doi.org/10.1142/S0219887816500407}{{\em Int. J. Geom. Meth.
  Mod. Phys.} {\bf 13} (2016) no.~04, 1650040},
\href{http://arxiv.org/abs/1506.07451}{{\tt arXiv:1506.07451 [math.DG]}}.

\bibitem{Bogoslovsky}
G.~Bogoslovsky, ``A special-relativistic theory of the locally anisotropic
  space-time,'' {\em Il Nuovo Cimento B Series 11} {\bf 40} (1977)  99.

\bibitem{Rubilar:2007qm}
G.~F. Rubilar, ``Linear pre-metric electrodynamics and deduction of the light
  cone,''
  \href{http://dx.doi.org/10.1002/1521-3889(200211)11:10/11<717::AID-ANDP717>3.0.CO;2-6}{{\em
  Annalen der Physik} {\bf 11} (2002) no.~10--11, 717--782},
  \href{http://arxiv.org/abs/0706.2193}{{\tt arXiv:0706.2193 [gr-qc]}}.

\bibitem{Gurlebeck:2018nme}
N.~Gürlebeck and C.~Pfeifer, ``{Observers’ measurements in premetric
  electrodynamics: Time and radar length},''
  \href{http://dx.doi.org/10.1103/PhysRevD.97.084043}{{\em Phys. Rev.} {\bf
  D97} (2018) no.~8, 084043},
\href{http://arxiv.org/abs/1801.07724}{{\tt arXiv:1801.07724 [gr-qc]}}.

\bibitem{Shen}
D.~Bao, S.~S. Chern, and Z.~Shen, {\em An Introduction to Riemann-Finsler
  geometry}.
\newblock Springer, 2000.

\bibitem{Giachetta-book}
G.~Giachetta, L.~Mangiarotti, and G.~Sardanashvily, {\em Advanced Classical
  Field Theory}.
\newblock World Scientific, 2009.

\bibitem{Chern-Shen-Lam}
S.~S. Chern, W.~S. Chen, and K.~S. Lam, {\em Lectures on Differential
  Geometry}.
\newblock World Scientific, 1999.

\bibitem{Pirani}
F.~A.~E. Pirani, {\em In Lectures on General Relativity (Brandeis Summer Inst.
  in Theoretical Physics)}.
\newblock Prentice-Hall, Englewood Cliffs, New Jersey, 1964.

\bibitem{Krupka-Notes}
D.~Krupka, ``Finsler spaces and projective bundles,'' {\em Lecture notes, 23rd
  Summer School on Global Analysis and Applications, Brasov, Romania} (2018)  .

\bibitem{Ehlers2011}
J.~Ehlers, {\em General-Relativistc Kinetic Theory Of Gases},
  \href{http://dx.doi.org/10.1007/978-3-642-11099-3_4}{pp.~301--388}.
\newblock Springer Berlin Heidelberg, Berlin, Heidelberg, 2011.

\bibitem{Hohmann:2015ywa}
M.~Hohmann, ``{Non-metric fluid dynamics and cosmology on Finsler
  spacetimes},'' \href{http://dx.doi.org/10.1142/S0217751X16410128}{{\em Int.
  J. Mod. Phys.} {\bf A31} (2016)  1641012},
\href{http://arxiv.org/abs/1508.03304}{{\tt arXiv:1508.03304 [gr-qc]}}.

\bibitem{SSVLocalConvex}
R.~Sacksteder, E.~G. Straus, and F.~A. Valentine,
  \href{http://dx.doi.org/10.1112/jlms/s1-36.1.52}{``{A Generalization of a
  Theorem of Tietze and Nakajima on Local Convexity},''{\em Journal of the
  London Mathematical Society} {\bf s1-36} (01, 1961)  52--56},
  \href{http://arxiv.org/abs/http://oup.prod.sis.lan/jlms/article-pdf/s1-36/1/52/2470015/s1-36-1-52.pdf}{{\tt
  http://oup.prod.sis.lan/jlms/article-pdf/s1-36/1/52/2470015/s1-36-1-52.pdf}}.
  \url{https://doi.org/10.1112/jlms/s1-36.1.52}.

\bibitem{xact}
J.~M. Martín-García, {\em xAct: Efficient tensor computer algebra for
  Mathematica}.
\newblock http://xact.es/, 2002 - 2018.

\bibitem{Hoermander2}
L.~H{\"o}rmander, {\em The Analysis of Linear Partial Differential Operators
  {II}: Differential Operators with Constant Coefficients}.
\newblock No.~257 in Grundlehren der mathematischen Wissenschaften. Springer,
  1983.

\end{thebibliography}\endgroup

\newpage
\hypertarget{randers}{}
\includepdf[pages=1]{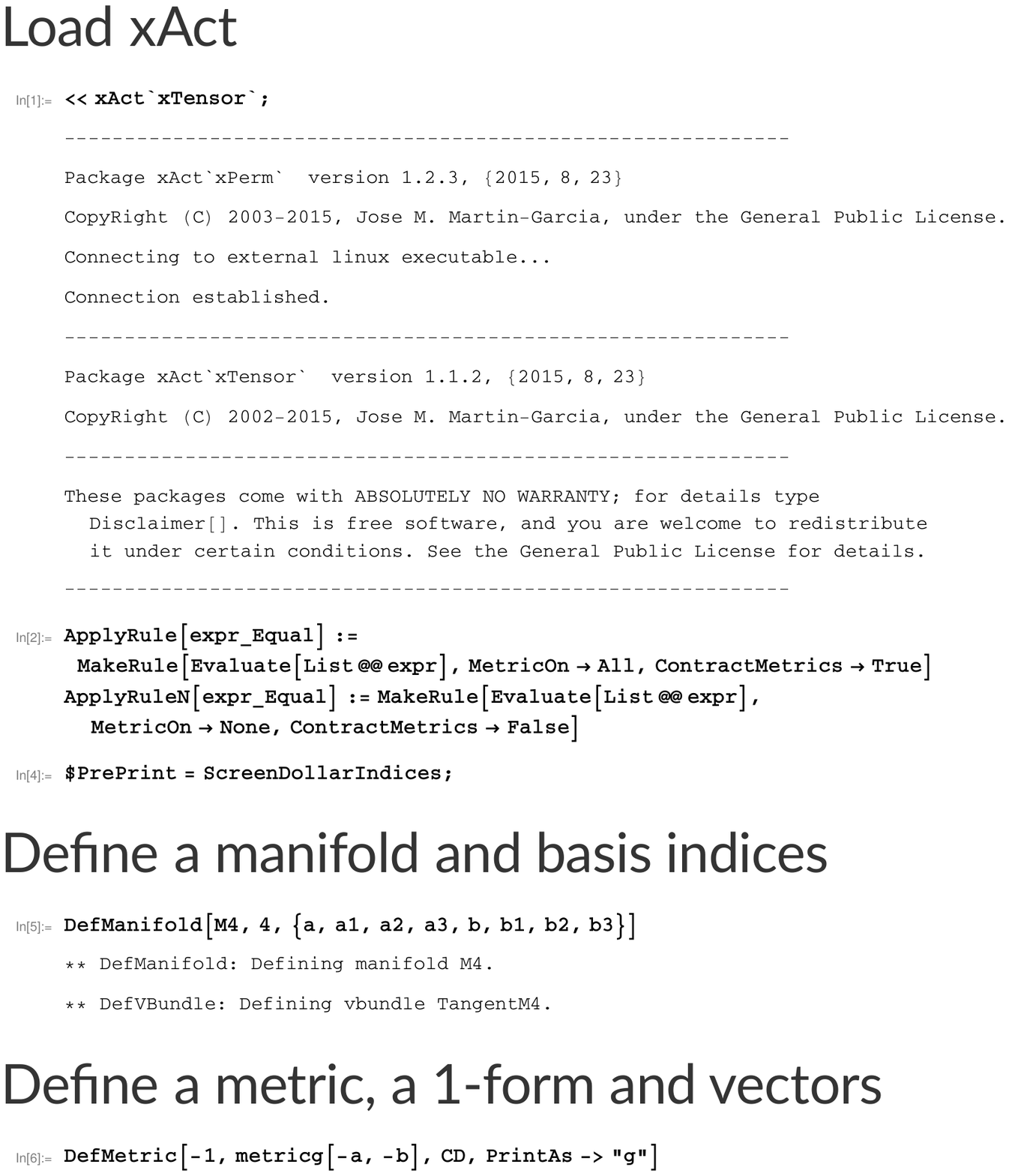}
\includepdf[pages=2]{xActRanders.pdf}
\includepdf[pages=3]{xActRanders.pdf}
\includepdf[pages=4]{xActRanders.pdf}
\includepdf[pages=5]{xActRanders.pdf}
\includepdf[pages=6]{xActRanders.pdf}
\hypertarget{kropbog}{}
\includepdf[pages=1]{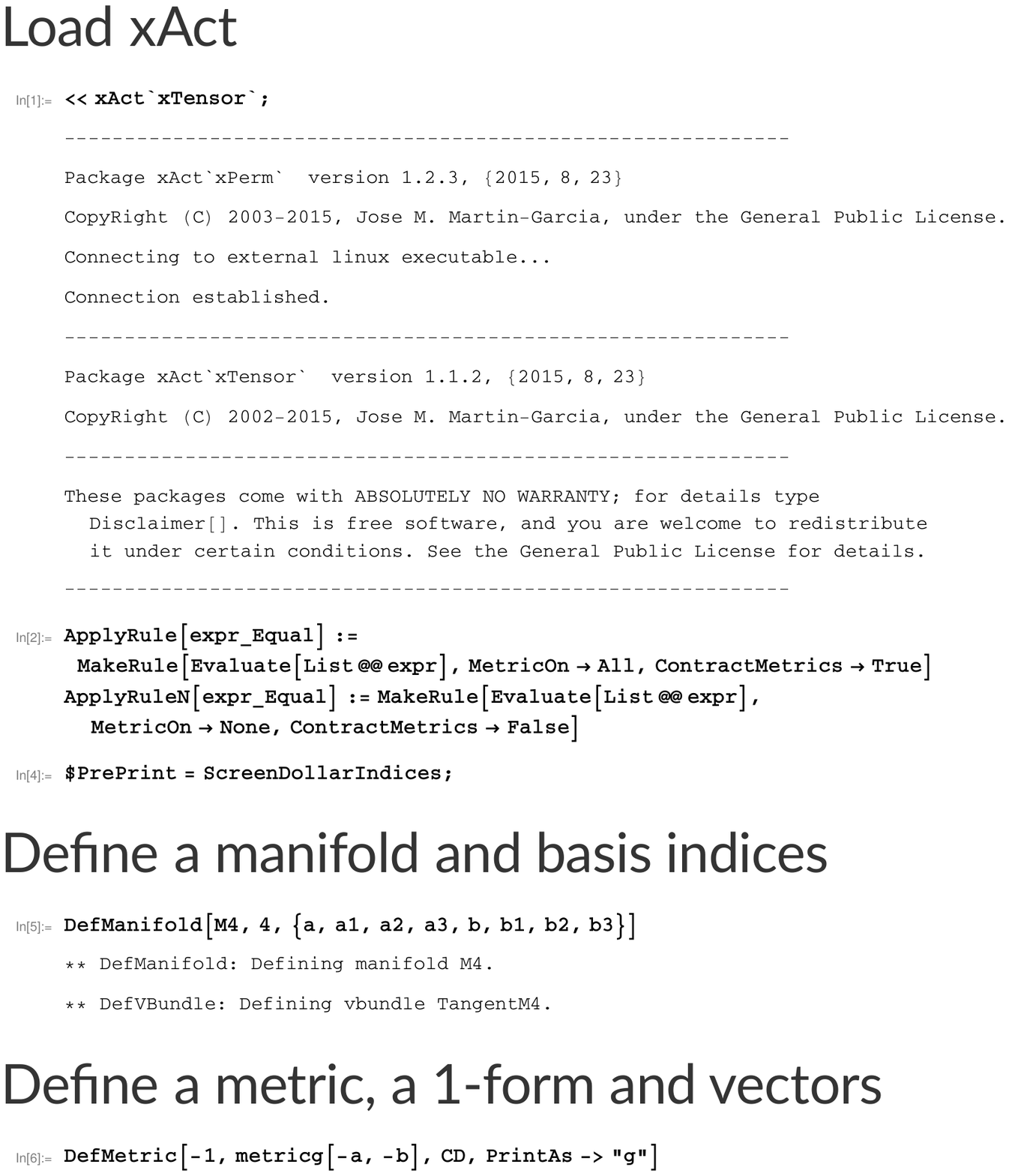}
\includepdf[pages=2]{xActKropBog.pdf}
\includepdf[pages=3]{xActKropBog.pdf}
\includepdf[pages=4]{xActKropBog.pdf}
\includepdf[pages=5]{xActKropBog.pdf}

\end{document}